\newlength\OneImW
\newlength\BigOneImW
\newlength\twofigwidth
\newtheorem{proposition}{Proposition}
\newtheorem{theorem}{Theorem}
\newtheorem{lemma}{Lemma}
\newcommand{\ord}{\mathrm{ord}}
\newcommand{\Zp}[1]{\mathbb{Z}_{p^{#1}}}
\newcommand{\Zt}[1]{\mathbb{Z}_{2^{#1}}}
\newcommand{\vvert}[0]{{\,\vert \,}}
\newcommand{\newItemizeWidth}{%
	\setlength{\labelwidth}{\widthof{\textbullet}}%
	\setlength{\labelsep}{3pt}%
	\setlength{\IEEElabelindent}{4pt}%
	\IEEEiedlabeljustifyl
}
\begin{document}

\title{Graph Structure of Chebyshev Permutation Polynomials over Ring $\Zp{k}$}
\author{Chengqing Li, Xiaoxiong Lu, Kai Tan, and Guanrong Chen, 
\IEEEmembership{Life Fellow,~IEEE}
    \thanks{This work was supported by the National Natural Science Foundation of China (no.~92267102). \textit{(Corresponding author: Xiaoxiong Lu.)}
    }
    
    \thanks{C. Li is with the School of Computer Science, Xiangtan University, Xiangtan 411105, Hunan, China (\protect\url{DrChengqingLi@gmail.com}).}

    \thanks{X. Lu is with the School of Mathematics and Computational Science, Xiangtan University, Xiangtan 411105, China (\protect\url{xiaoxionglu1@foxmail.com}).}
    
    \thanks{K. Tan is with the Key Laboratory of Intelligent Computing \& Information Processing of Ministry of Education, Xiangtan University, Xiangtan 411105, Hunan, China }
    
    \thanks{G. Chen is with the Department of Electrical Engineering, City University of Hong Kong, Hong Kong SAR, China.}
}
    \markboth{IEEE Transactions}{Li \MakeLowercase{\textit{et al.}}}
    \IEEEpubid{\begin{minipage}{\textwidth}\ \\[12pt]\centering
    \\[2\baselineskip]
    0018-9448 \copyright 2024 IEEE. Personal use is permitted, but republication/redistribution requires IEEE permission.\\
    See http://www.ieee.org/publications\_standards/publications/rights/index.html for more information.\\
    \end{minipage}
}

\maketitle

\begin{abstract}
Understanding the underlying graph structure of a nonlinear map over a particular domain
is essential in evaluating its potential for real applications.
In this paper, we investigate the structure of the associated \textit{functional graph} of Chebyshev permutation polynomials over a ring $\Zp{k}$, with $p$ being a prime number greater than three, where every number in the ring is considered as a vertex and the existing mapping relation between two vertices is regarded as a directed edge.
Based on some new properties of Chebyshev polynomials and their derivatives, we disclose how the basic structure of the functional graph evolves with respect to parameter $k$.
First, we present a complete and explicit form of the length of a path starting from any given vertex.
Then, we show that the strong patterns of the functional graph that the number of cycles of any given length always remains constant as $k$ increases.
Moreover, we rigorously prove the rules on the elegant structure of the functional graph and verify them experimentally.
Our results could be useful for studying the emergence of the complexity of a nonlinear map in digital computers and security analysis of its cryptographic applications.
\end{abstract}
\begin{IEEEkeywords}
Chebyshev polynomial, cycle distribution, functional graph, permutation polynomial, pseudo-random number sequence, polynomial congruence, period distribution, sequence analysis.
\end{IEEEkeywords}

\section{Introduction}

\IEEEPARstart{C}{hebyshev} polynomial, named after the Russian mathematician Pafnuty L. Chebyshev~\cite{Chebyshev:life:JAT99},
find applications in diverse fields, such as function approximation~\cite{Lee:cheby-approxi:TDSC22}, 
pseudorandom number generator~\cite{Liu:Chebyrandom:2011}, spread-spectrum sequence \cite{Chencc:chebyshev:TCASI2001}, 
authentication \cite{chatterjee2018DSC,Tomar:auth:TMC2024}, key exchange protocol \cite{Dariush:Chebyshevprivacy:TII2020}, and
privacy protection~\cite{Zhang:Privacy:TII:2022}.
The Chebyshev polynomials of the first kind are defined 
by the recurrence relation:
\begin{equation}\label{eq:recursive}
T_n(x)=2xT_{n-1}(x)-T_{n-2}(x),
\end{equation}
where $T_0(x)=1$, $T_1(x)=x$, and $n$ is the degree of the
polynomial. The Chebyshev polynomials of the second kind 
follow a similar recurrence relation:
\(
U_{n+1}(x)=2xU_n(x)-U_{n-1}(x), 
 \)
with initial conditions $U_0(x)=1$ and $U_1(x)=2x$.
Moreover, one may define Chebyshev polynomials as solutions to the Pell equation~\cite{Rivlin:chebyshev:1990}, which yields
\begin{equation}\label{eq:s:sunx}
\left(x+\sqrt{x^2-1}\right)^n=T_n(x)+U_{n-1}(x)\sqrt{x^2-1}.
\end{equation}
When the variable $x$ is defined over real numbers, the recurrence relation in
\eqref{eq:recursive} simplifies to
$T_n(x)=\cos(n\arccos(x))$, resulting in the explicit expression for the degree:
$n\in\{\frac{ \pm \arccos(T_n(x))+2i\pi}{\arccos(x)}\vvert i \text{~is an integer}\}$.
However, when defined over other finite algebraic fields, such explicit expressions typically do not exist 
due to quantization and algorithmic errors in calculating trigonometric functions over finite-precision domains~\cite{Bergamo:PKIchebyshev:CASI2005,Liao:ITC:2010}.

\IEEEpubidadjcol

The semigroup property of the Chebyshev polynomials (See definition in Sec.~\ref{ssec:pre}) has attracted considerable attention from the field of cryptography.
These polynomials are employed as an innovative approach to design public-key encryption algorithms \cite{Kocarev:2003, Bergamo:PKIchebyshev:CASI2005, Cheong:PKIchebyshev:CASII2007, Kocarev:2005}, with the first publication on the polynomials over the field of all rational and irrational numbers in 2003~\cite{Kocarev:2003}.
However, since the degree of a Chebyshev polynomial can be explicitly expressed over the real number field, recent analyses have uncovered security flaws in the algorithm designed initially in \cite{Kocarev:2003}, as reported in~\cite{Bergamo:PKIchebyshev:CASI2005, Cheong:PKIchebyshev:CASII2007}. 
Subsequently, the numerical domain was changed from the real number field to ring $\Zp{k}$~\cite{Kocarev:2005}, a ring of residue classes modulo $p^k$, where $k$ is a positive integer and $p$ is a prime number.
The security of the associated algorithms depends on the difficulty of deriving the degree of the Chebyshev polynomial used, which satisfies
$y\equiv T_n(x)\pmod{p^k}$ for given $x$, $y$, and $k$. Recent cryptanalysis
has shown that the public-key encryption algorithm is insecure when the parameters are improperly selected~\cite{Liao:ITC:2010, Chen:chebyZN:IS2011, Yoshioka:ChebyshevPk:TCAS2:2020, Yoshioka:Chebyshev2k:TCAS2:2016, Yoshioka:ChebyshevPk:TCAS2:2018}.
Nevertheless, a complete security analysis of cryptographic applications of Chebyshev polynomials over $\Zp{k}$ necessitates disclosure of its graph structure.

There are two types of sequences produced by Chebyshev polynomials over $\Zp{k}$: Chebyshev degree sequence
$\{T_i(x)\bmod p^k\}_{i\ge 0}$~\cite{Liao:ITC:2010, Chen:chebyZN:IS2011, Yoshioka:ChebyshevPk:TCAS2:2020} and Chebyshev integer
sequence~$\{T_n^i(x)\bmod p^k\}_{i\ge 0}$~\cite{Umeno:permute:2018, Yoshioka:Chebyshev2k:TCAS2:2016, Yoshioka:ChebyshevPk:TCAS2:2018}, where
$T_n^i(x)$ is the $i$-th iteration of $T_n(x)$.
The former is essentially a homogeneous linear recurring sequence, characterized by its periodicity for any set of parameters and initial states~\cite{Rudolf:Introfinite:1994}.
The period distributions of Chebyshev degree sequences over prime field $\mathbb{F}_p$~\cite{Liao:ITC:2010} and integer ring $\mathbb{Z}_q$~\cite{Chen:chebyZN:IS2011} have been systematically analyzed by the generating function method, showing that the security of the public-key encryption algorithm based on a Chebyshev polynomial depends on the parameter $p$ and $q$, where $q$ is composite.
Combining the variation rule of the period of a Chebyshev degree sequence with increasing parameter $k$, a polynomial-time algorithm was designed in~\cite{Yoshioka:ChebyshevPk:TCAS2:2020} to determine the equivalence class of the degree of a Chebyshev polynomial modulo $p^k$ when $p$ is relatively small. Chebyshev integer sequence is defined by a nonlinear congruence iterative sequence.
It is periodic if and only if the Chebyshev polynomial acts as a permutation polynomial over $\Zp{k}$.
The periods of the Chebyshev integer sequences, generated by iterating Chebyshev permutation polynomials over rings $\Zt{k}$ and $\Zp{k}$, were investigated in \cite{Umeno:permute:2018, Yoshioka:Chebyshev2k:TCAS2:2016, Yoshioka:ChebyshevPk:TCAS2:2018}.
Moreover, an algorithm with time complexity $O(k)$ was developed for determining the degree of the Chebyshev polynomial modulo $2^k$~\cite{Yoshioka:Chebyshev2k:TCAS2:2016}. 
It was reported in \cite{Yoshioka:ChebyshevPk:TCAS2:2018} that a key exchange protocol based on a Chebyshev polynomial modulo $p^k$ may be vulnerable to brute-force attacks under specific parameters. 
While the period of Chebyshev degree sequences across various domains can be analyzed using algebraic tools typical for linear recurring sequences, the nonlinear complexity of Chebyshev integer sequences renders their period analysis more challenging.

The graph structure of a nonlinear system over different domains can be used to evaluate its dynamic complexity and randomness \cite{chen:PRNS:TC22, cqli:network:TCASI2019, cqli:Cat:TC22, licq:Logistic:IJBC2023, Chen:cat:TIT2012, chenf:cat2:TIT13}.
In~\cite{cqli:network:TCASI2019, cqli:Cat:TC22, Li:ISIT2024}, 
the evolution rules for the graph structure of various chaotic maps, including the Logistic, Tent, and Cat maps, on finite-precision computers, were elucidated.
We thoroughly detailed the graph structure and associated evolution rules of the generalized Cat map over any binary arithmetic domain.
The periodicity properties of sequences generated by the Logistic and Cat maps over different rings were discussed in \cite{licq:Logistic:IJBC2023, Chen:cat:TIT2012, chenf:cat2:TIT13}.
The theoretical basis for the graph structure of a linear feedback shift register with an arbitrary characteristic polynomial over the finite field $\mathbb{F}_{p^k}$ was established based on cyclotomic classes and decimation sequences~\cite{Chang:cycleLFSR:CC18}.
Additionally, the graph structure of a class of cascaded feedback registers was determined by solving a system of linear equations~\cite{gongg:cycle:TIT19}.
Using a specialized tree associated with $v$-series, the functional graphs of R{\'e}dei functions~\cite{Daniel:Redei:DM2015}, Chebyshev polynomials~\cite{Daniel:Chebyshev:2018}, and general linear maps~\cite{Daniel:lineargraph:DCC2019} over the finite field $\mathbb{F}_{p^k}$ were analyzed concerning the number of connected components, cycle lengths, and the average pre-period (transient) lengths.
Moreover, a comprehensive review was presented in \cite{Li:neteork:ISCAS2019} on studying the dynamics of digitized nonlinear maps via the functional graph.

Although the graph structure of Chebyshev polynomials over finite fields $\mathbb{F}_{p^k}$ and ring $\Zt{k}$ has been extensively investigated in the past decade, as reported in \cite{Umeno:permute:2018, Yoshioka:Chebyshev2k:TCAS2:2016, Yoshioka:ChebyshevPk:TCAS2:2018,Daniel:Chebyshev:2018, gassert:chebyshevaction:DM2014,Yoshioka:ISIT2023}, its structure over the ring $\Zp{k}$ remains an open question, potentially due to the high computational complexity associated with higher-degree Chebyshev polynomials.
Notably, D. Yoshioka provided the periodicity of the Chebyshev integer sequence 
$\{T_n^i(x)\}_{i\ge 0}$ generated from partial initial states over the ring $\Zp{k}$
under specific assumptions~\cite{Yoshioka:ChebyshevPk:TCAS2:2018}. However, the exact explicit expression of the
periodicity remains a subject of debate, as highlighted at the beginning of Sec.~\ref{subsec:cheby:integer}.
The primary contributions of this paper are as follows:
\begin{itemize}[\newItemizeWidth]
\item We rigorously prove some new properties of Chebyshev polynomials and
their derivative over the ring $\Zp{k}$.

\item We refine and enhance the periodicity analysis of Chebyshev integer sequences over the ring $\Zp{k}$ as presented in \cite{Yoshioka:ChebyshevPk:TCAS2:2018}.

\item We disclose the graph structure of Chebyshev permutation polynomials over the ring $\Zp{k}$ and especially how it varies with the parameter $k$, enriching the research on polynomial congruences with prime power moduli with analytic number theory.

\item We develop an effective parameter selection strategy to thwart brute-force attacks on a key exchange protocol based on Chebyshev polynomials.
\end{itemize}
This paper also offers new perspectives for constructing complete permutation polynomials investigated in \cite{Weng:notePP:TIT2008,Sun:newCPP:TIT2021,Wu:CPP:TIT2023}.

The rest of the paper is organized as follows. Section~\ref{Sec:Chebypre}
derives the period of the Chebyshev integer sequence generated from any initial state over the ring $\Zp{k}$.
The graph structure of Chebyshev permutation polynomials over the ring is completely disclosed in Sec.~\ref{sec:network}.
Section~\ref{sec:application} briefly discusses some cryptographic applications of the studied polynomials.
The last section concludes the paper.

\section{Some Properties of Chebyshev Polynomial and Chebyshev Integer Sequence}
\label{Sec:Chebypre}

In this section, we revisit the established results concerning the Chebyshev polynomial and introduce novel properties of its $m$-order derivative. These properties facilitate the analysis of the explicit expression of the least period of Chebyshev integer sequence originating from 
any initial state.

\subsection{Preliminary}
\label{ssec:pre}

To assist in the subsequent discussion, Table~\ref{tab:list:symbol} provides a list of notation that recur throughout this paper, excluding those that appear only in specific contexts.

\begin{table*}[!htb]
\caption{Nomenclature.}
\centering
\setlength{\tabcolsep}{2pt}
    \begin{tabular}{ll|ll}
    \hline
    $\mathbb{Z}$ & The set of all integers.  & $\mathbb{N}^+$  & The set of all positive integers.\\ \hline
    $\mathcal{Z}_{p^k}$ & The set of non-negative integers less than $p^k$.    & $T^{(m)}_n(x)$  & The $m$-order derivative of Chebyshev polynomial $T_n(x)$ satisfying \eqref{PermuteCondition}.\\ \hline
    $\Zp{k}$           & The ring of residue classes of the integers modulo $p^k$.      & $T^m_n(x)$ & The composition of Chebyshev polynomial $T_n(x)$ with itself $m$ times.\\ \hline
    $N_x$ & The least period of sequence $\{T_n^m(x)\bmod p\}_{m\ge 0}$. & $v_p(x)$ &The exponent of the largest power of $p$ that divides integer $x$.\\ \hline
    $b\bmod q$ & 
    $b-\lfloor\frac{b}{q}\rfloor$, where $\lfloor\cdot\rfloor$is the floor function.
    & $s_p(x)$  & The sum of the digits in the base-$p$ expansion of $x$.\\ \hline
    $a\mid b$ &The integer $a$ divides the integer $b$.
         & $\ord(x)$  & The least positive number $j$ such that $x^j\equiv 1\pmod p$. \\ \hline
    $\gcd(a,b)$ & The greatest common divisor of the integers $a$ and $b$.
    &
    $a\equiv b\pmod{q}$ & The base $a$ is congruent to the residue $b$ modulo $q$.
    \\\hline
    \end{tabular}
\label{tab:list:symbol}
\end{table*}

For any integer $n > 1$, the Chebyshev polynomial of degree $n$ can be expressed in power series form:
\begin{equation}
T_n(x)=\frac{n}{2} \sum_{m=0}^{\lfloor n/2 \rfloor} (-1)^m \frac{(n-m-1)!}{m! (n-2m)!} (2x)^{n-2m},
\label{cheby:power}
\end{equation}
as detailed in \cite{Lidl:Dick:1993}. One of the most significant properties of Chebyshev polynomials is the semigroup property, articulated as
$T_{n_1}(T_{n_2}(x))=T_{n_2}(T_{n_1}(x))=T_{n_1 \cdot n_2}(x)$
for any two positive integers $n_1$ and $n_2$.
Consequently, the composition of the Chebyshev polynomial $T_n(x)$ with itself $N$ times
is denoted as $T^N_n(x)=T_{n^N}(x)$. This semigroup property of Chebyshev polynomials over $\Zp{k}$ remains valid, namely
$T_{n_1}(T_{n_2}(x)) \equiv T_{n_2}(T_{n_1}(x)) \equiv T_{n_1 \cdot n_2}(x)\pmod{p^k}$.

Given an initial state $s\in\mathcal{Z}_{p^k}$, the $i$-th iteration of $T_n(s)$ over $\Zp{k}$ is
$T^i_n(s)=T_n(T^{i-1}_n(s))\bmod p^k$, 
where $i\geq 1$ and $T^0_n(s)=s$.
The least period of a sequence generated by iterating a Chebyshev polynomial is defined as the minimum positive integer $N$ such that
$T^N_n(s)\equiv s\pmod{p^k}$.
A function that can return to its initial state after a certain number of iterations is referred to as a \textit{permutation function} and exists if and only if it is bijective over its domain.
When $k>1$, Yoshioka demonstrated that a Chebyshev polynomial $T_n(x)$ over $\Zp{k}$ is a \textit{permutation polynomial} if and only if
\begin{equation}\label{PermuteCondition}
\gcd(n, p^3-p)=1,
\end{equation}
indicating that $2\nmid n$ and $3\nmid n$~\cite{Yoshioka:ChebyshevPk:TCAS2:2018}.
Throughout the remainder of this paper, unless otherwise specified, it is assumed that
$n\in \mathbb{N}^+$, $p$ is a prime number greater than 3,
and they satisfy condition~\eqref{PermuteCondition}. 
The graph structure of the Chebyshev polynomial with $p=2$ is examined in \cite{Yoshioka:Chebyshev2k:TCAS2:2016},
while that with $p=3$ requires a distinct analysis, as discussed in \cite{licq:Logistic:IJBC2023}, which addresses the Logistic map over the ring $\mathbb{Z}_{3^k}$.

\subsection{Some Properties of the Derivative of a Chebyshev Polynomial}

Referring to the Taylor expansion of the Chebyshev polynomials $T_n(x)$, one observes that $\frac{T^{(m)}_n(x)}{m!}$ is an integer for any integer $x$. Moreover, according to Legendre's formula (see \cite{Moll:numbersandfunctions:2012}), Lemma~\ref{lemma:pw:Tn'1} describes the number of factors of $p$ contained in the $m$-order derivative of $T_n(x)$ evaluated at $\pm 1$. Note that Lemma~\ref{lemma:pw:Tn'1} does not hold for the case $(m, p)=(2, 3)$. 
In addition, Lemma~\ref{lemma:tn(-1+p):equal} 
establishes an equivalence between the first-order derivatives of $T_n(x)$ at $-1$ and $p-1$ 
provided that these values are congruent to $1$ over $\Zp{w}$, where $w\in \mathbb{N}^+$.
The interrelation between derivatives of three consecutive orders of $T_n(x)$ is discussed in Lemma~\ref{lemma:dTn:x}.
Lemma~\ref{lemma:T'n(x):n} demonstrates that the first derivative of $T_n(x)$ at a state that the polynomial maps onto itself is equal to $\pm n$ over $\Zp{w}$. 
Eliminating the higher-order terms in the Taylor expansion of $T_n(x)$ using these four lemmas, Lemma~\ref{lemma:Tn:0pw} provides a sufficient condition that guarantees 
the first-order derivative of $T_n(x)$ at $s+h\cdot p$
is congruent to that at $s$ over $\Zp{w+1}$.
Additionally, the higher-order derivatives of $T_n(x)$ at $s+h\cdot p$ are congruent to zero over $\Zp{w+2}$, where $s\in\mathcal{Z}_p$ and $h\in\mathbb{Z}$. Furthermore, Proposition~\ref{lemma:Tn(x):w} gives a sufficient condition for extending the fixed point $s$
of $T_n(x)$ over $\Zp{}$ to another fixed point $s+h\cdot p$ over $\Zp{w+1}$.

\begin{lemma}\label{lemma:pw:Tn'1}
If the derivative of the Chebyshev polynomial $T_n(x)$ satisfies
$T'_n(\pm 1)\equiv 1\pmod{p^w}$, then
$\frac{T^{(m)}_n(\pm 1) \cdot p^m}{m!} \equiv 0\pmod{p^{w+2}}$ when \(m\geq 2\), where $w \in \mathbb{N}^+$.
\end{lemma}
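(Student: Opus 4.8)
The plan is to reduce everything to the point $x=1$, turn Lemma~\ref{lemma:dTn:x} into a one‑term recurrence in the order $m$, and read off an exact formula for the $p$‑adic valuation of $T_n^{(m)}(1)/m!$. Throughout let $v_p(\cdot)$ denote the exponent of $p$ in the prime factorization, so the assertion is $v_p\bigl(T_n^{(m)}(\pm1)\,p^m/m!\bigr)\ge w+2$. First, condition~\eqref{PermuteCondition} together with $p^2-1$ being even forces $n$ to be odd, so~\eqref{eq:TnNx} gives $T_n(-x)=-T_n(x)$, hence $T_n^{(m)}(-x)=(-1)^{m+1}T_n^{(m)}(x)$; in particular $T_n'(-1)=T_n'(1)$ and $v_p\bigl(T_n^{(m)}(-1)\bigr)=v_p\bigl(T_n^{(m)}(1)\bigr)$, so it is enough to argue at $x=1$, where also $T_n(1)=1$. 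If $m>n$ then $T_n^{(m)}\equiv0$ and there is nothing to prove, so assume $m\le n$. Evaluating Lemma~\ref{lemma:dTn:x} at $x=1$ (so its left side vanishes), using $\sum_{t=0}^{m-3}(2t+1)=(m-2)^2$ and reindexing $j=m-1$, one obtains
\[
(2j-1)\,T_n^{(j)}(1)=\bigl(n^2-(j-1)^2\bigr)\,T_n^{(j-1)}(1)\qquad(j\ge1),
\]
with $T_n^{(0)}(1)=1$ and $T_n^{(1)}(1)=n^2$; this also shows that the hypothesis $T_n'(1)\equiv1\pmod{p^{w}}$ is exactly $v_p(n^2-1)\ge w$.

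Next set $b_m:=T_n^{(m)}(1)/m!$, which is an integer by~\eqref{eq:Tn/n:int} and nonzero for $m\le n$; the recurrence becomes $m(2m-1)\,b_m=(n-m+1)(n+m-1)\,b_{m-1}$. Applying $v_p$ and telescoping from $b_1$ (with $v_p(b_1)=v_p(n^2)=0$), and using $\prod_{j=2}^{m}(2j-1)=(2m)!/(2^m\,m!)$ together with the oddness of $p$, yields the closed form
\[
v_p(b_m)=\sum_{i=1}^{m-1}\bigl(v_p(n-i)+v_p(n+i)\bigr)-v_p\bigl((2m)!\bigr).
\]
Since $p$ is odd, at most one of $n-1,n+1$ is divisible by $p$; choose $\epsilon\in\{\pm1\}$ with $v_p(n-\epsilon)=v_p(n^2-1)\ge w$ and $v_p(n+\epsilon)=0$. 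Now suppose $2\le m\le w+1$. Each offset $\epsilon\pm i$ with $1\le i\le m-1$ has absolute value at most $m\le w+1<p^{w}$, hence $v_p$ less than $w$ whenever it is nonzero; so by the ultrametric inequality $v_p(n-i)=v_p(\epsilon-i)$ and $v_p(n+i)=v_p(\epsilon+i)$ except at the single index $i=1$, which makes one offset vanish and whose factor then contributes $v_p(n-\epsilon)$. Splitting off that term and recognizing the two remaining sums as $v_p\bigl((m-2)!\bigr)$ and $v_p(m!)$ collapses the formula to
\[
v_p(b_m)=v_p(n^2-1)+v_p\bigl((m-2)!\bigr)+v_p(m!)-v_p\bigl((2m)!\bigr).
\]

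It remains to verify $v_p(b_m)+m\ge w+2$ in all regimes. For $m\ge w+2$ this is immediate since $v_p(b_m)\ge0$. For $2\le m\le w+1$ the last display gives $v_p(b_m)+m\ge w+m-v_p\bigl((2m)!\bigr)$, so it suffices to bound $v_p\bigl((2m)!\bigr)$: for $m\ge4$ the Legendre estimate $v_p\bigl((2m)!\bigr)\le(2m-1)/(p-1)\le(2m-1)/4<m-2$ already forces $v_p(b_m)+m>w+2$, while the residual cases $m=2$ (where $v_p(4!)=0$ because $p>3$) and $m=3$ (where $v_p\bigl((m-2)!\bigr)+v_p(m!)=0$ and $v_p(6!)\le1$) are settled by a one‑line computation. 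The main obstacle is obtaining the clean closed form for $v_p(b_m)$: telescoping the recurrence is routine, but the subsequent sorting of the terms $v_p(n\pm i)$ into the "small" ones equal to $v_p(\epsilon\pm i)$ and the single "large" one equal to $v_p(n^2-1)$ relies essentially on the bound $m\le w+1$ keeping every offset below $p^{w}$; once that dichotomy is in place, the required valuation inequality follows from elementary estimates on $v_p\bigl((2m)!\bigr)$.
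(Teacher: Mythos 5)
Your proof is correct, and while it starts from the same object as the paper -- the product formula $T^{(m)}_n(\pm 1)=(\pm1)^{n+m}\prod_{j=0}^{m-1}\frac{n^2-j^2}{2j+1}$ of Eq.~\eqref{eq:t'n1}, which you rederive from Lemma~\ref{lemma:dTn:x} at $x=1$ rather than quote -- the valuation bookkeeping is done quite differently. The paper bounds $E\bigl(\prod(n^2-i^2)\bigr)$, $E\bigl(\prod(2j+1)\bigr)$ and $E(m!)$ separately, with a case split at $m=\frac{p+1}{2}$ and $m=p$ and an injective matching (the sets $S_j$) of each $p$-divisible odd denominator $2j+1$ to indices $i$ in the numerator; this yields only the one-sided bounds in \eqref{eq:pw:T'n} and \eqref{eq:pm/Em!}, which are then combined. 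You instead work directly with the integer $b_m=T^{(m)}_n(1)/m!$, telescope the recurrence $m(2m-1)b_m=(n-m+1)(n+m-1)b_{m-1}$, and use the ultrametric inequality (valid precisely because $m\le w+1<p^w$ keeps every offset $\epsilon\pm i$ below $p^w$) to obtain the exact identity $v_p(b_m)=v_p(n^2-1)+v_p\bigl((m-2)!\bigr)+v_p(m!)-v_p\bigl((2m)!\bigr)$, after which Legendre's formula gives $v_p\bigl((2m)!\bigr)\le m-2$ for $p>3$ and the claim follows. Your version buys a cleaner and sharper statement (an equality for $v_p(b_m)$ rather than inequalities), avoids the somewhat delicate $S_j$ matching, and isolates exactly where $p>3$ is used (in $v_p(4!)=0$ and $v_p\bigl((2m)!\bigr)\le\frac{2m-1}{4}$), which is the hypothesis that fails for $p=3$ as the paper later remarks; the paper's version has the advantage of producing the intermediate congruences \eqref{eq:pw:T'n} and \eqref{eq:pm/Em!}, which are reused verbatim in the proofs of Lemmas~\ref{lemma:tn(-1+p):equal} and \ref{lemma:Tn:0pw}. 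All the individual steps of your argument check out, including the reduction to $x=1$ via the oddness of $n$ (forced by $\gcd(n,p^2-1)=1$) and the separate treatment of $m>n$ and $m\ge w+2$.
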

\begin{proof}
As shown in \cite[Eq. (1.97)]{Rivlin:chebyshev:1990}, the $m$-order derivative of the Chebyshev polynomial $T_n(x)$ at $\pm 1$ can be expressed as
\begin{equation}\label{eq:t'n1}
T^{(m)}_n(\pm 1)=(\pm1)^{n+m}\frac{\prod_{j=0}^{m-1}(n^2-j^2)}{(2m-1)!!}.
\end{equation}
For $m=1$, one has 
\begin{equation*}
T'_n(\pm 1)=(\pm1)^{n+1} \cdot n^2,
\end{equation*}
which means 
\begin{equation}\label{eq:pm12}
T'_n(\pm 1)=n^2
\end{equation}
as $n$ is odd from (\ref{PermuteCondition}).
From $T'_n(\pm 1) \equiv 1\pmod{p^w}$, it follows that $p^w \mid (n^2-1)$.
Referring to \eqref{eq:t'n1}, one obtains 
$v_p\left(\frac{T^{(m)}_n(\pm 1) \cdot p^m}{m!}\right)
=v_p\left(\prod\limits_{j=2}^{m-1}(n^2-j^2)\right)+v_p(n^2)+v_p(n^2-1)-v_p((2m-1)!!)+m-v_p(m!)$,
implying
\begin{equation}\label{eq:vp:t'npm}
v_p\left(\frac{T^{(m)}_n(\pm 1) \cdot p^m}{m!}\right)\geq 
w-v_p((2m-1)!!)+m-v_p(m!).
\end{equation} 
From $(2m-1)!!=\frac{m\cdot(2m-1)!}{2^{m-1} \cdot m!}$, one has 
\begin{equation}\label{eq:2m-1!!}
v_p(m!)+v_p((2m-1)!!)=v_p(m)+v_p((2m-1)!).
\end{equation}
According to Legendre's formula,
$v_p(x!)=\frac{x-s_p(x)}{p-1}$, 
one has $v_p((2m-1)!)=\frac{2m-1-s_p(2m-1)}{p-1}$.
As $s_p(x)\ge 1$ for $x>0$ and $p\geq 5$,
it further deduces
$v_p((2m-1)!)\leq \frac{2m-2}{p-1}\leq \frac{m-1}{2}$. 
When $m \geq 5$, one has $m^2\leq 5^{m-3}$, namely 
$\log_5(m)\leq \frac{m-3}{2}$.
From $v_p(x)\leq\log_p(x)\leq\log_5(x)$ for $x\ge 1$, one has
$v_p(m)+v_p((2m-1)!)\leq\frac{m-3}{2}+\frac{m-1}{2}=m-2$.
Then, from \eqref{eq:vp:t'npm} and \eqref{eq:2m-1!!}, one has 
\(
v_p\left(\frac{T^{(m)}_n(\pm 1) \cdot p^m}{m!}\right) \geq
w+m-(m-2)=w+2
\)
for any $m \geq 5$.
When $m \in \{2, 3, 4\}$, one can verify 
the preceding inequality
by a direct calculation using \eqref{eq:vp:t'npm}.
Therefore, $\frac{T^{(m)}_n(\pm 1) \cdot p^m}{m!} \equiv 0\pmod{p^{w+2}}$ holds for $m \geq 2$.
\end{proof}

\begin{lemma}\label{lemma:tn(-1+p):equal}
The congruence $T'_n(p-1) \equiv 1\pmod{p^w}$ holds if and only if $T'_n(-1) \equiv 1\pmod{p^w}$, where $w \in \mathbb{N}^+$.
\end{lemma}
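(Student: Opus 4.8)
The plan is to reduce the ``if and only if'' to a single valuation identity. Writing $E(x)=\max\{t\mid x\equiv 0\pmod{p^t}\}$ as in the proof of Lemma~\ref{lemma:pw:Tn'1}, the assertion $T'_n(x)\equiv1\pmod{p^w}$ is simply $E\big(T'_n(x)-1\big)\ge w$, so it suffices to establish the symmetric fact
\[
E\big(T'_n(p-1)-1\big)=E\big(T'_n(-1)-1\big)=E(n^2-1);
\]
the last equality holds because $n$ is odd (since $2\mid p^2-1$ and $\gcd(n,p^2-1)=1$ by~\eqref{PermuteCondition}), so that Eq.~\eqref{eq:t'n1} at $m=1$ gives $T'_n(-1)=(-1)^{n+1}n^2=n^2$. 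To relate the two sides, differentiate expansion~\eqref{eq:Tn:x+y} in $x$ and set $x=-1$, $y=p$:
\[
T'_n(p-1)=\sum_{m=0}^{n-1}\frac{p^m}{m!}\,T^{(m+1)}_n(-1),
\]
where each coefficient $\frac{T^{(m+1)}_n(-1)}{m!}=(m+1)\cdot\frac{T^{(m+1)}_n(-1)}{(m+1)!}$ is an integer by Eq.~\eqref{eq:Tn/n:int}.

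Set $v=E(n^2-1)$. If $v=0$, then reducing the expansion modulo $p$ gives $T'_n(p-1)\equiv T'_n(-1)=n^2\not\equiv1\pmod p$, so both valuations are $0$ and the identity holds. If $v\ge1$, then $T'_n(-1)\equiv1\pmod{p^v}$, so Lemma~\ref{lemma:pw:Tn'1} applies with exponent $v$ and gives $E\!\left(\frac{T^{(s)}_n(-1)\,p^s}{s!}\right)\ge v+2$ for every $s\ge2$. Taking $s=m+1$ with $m\ge1$ and using $E\big((m+1)!\big)=E(m!)+E(m+1)$,
\[
E\!\left(\frac{T^{(m+1)}_n(-1)\,p^m}{m!}\right)=E\!\left(\frac{T^{(m+1)}_n(-1)\,p^{m+1}}{(m+1)!}\right)-1+E(m+1)\ \ge\ v+1 .
\]
Hence every term with $m\ge1$ in the expansion is divisible by $p^{v+1}$, so $T'_n(p-1)\equiv T'_n(-1)\pmod{p^{v+1}}$. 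Since $E\big(T'_n(-1)-1\big)=v<v+1$, this congruence pins down $E\big(T'_n(p-1)-1\big)=v$, which finishes the identity and hence the lemma.

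The step I expect to be the main obstacle is the index bookkeeping. Lemma~\ref{lemma:pw:Tn'1} controls the quantities $T^{(s)}_n(\pm1)\,p^s/s!$, whereas the Taylor coefficients appearing above are $T^{(m+1)}_n(-1)/m!$, so moving from $p^{m+1}/(m+1)!$ back to $p^m/m!$ apparently costs a power of $p$; one must check that the ``$+2$'' in the conclusion of Lemma~\ref{lemma:pw:Tn'1} is exactly the margin that absorbs this loss (the factor $E(m+1)\ge0$ only improving the bound), so that $p^{v+1}$ still divides every tail term. The remaining points are routine: confirming $n$ is odd, and separating off the degenerate case $v=0$, which cannot be passed through Lemma~\ref{lemma:pw:Tn'1} since that lemma requires a positive exponent.
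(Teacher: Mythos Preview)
Your proof is correct and takes a tidier route than the paper's. The paper handles the two implications asymmetrically: the direction from $T'_n(-1)$ to $T'_n(p-1)$ is immediate from the Taylor expansion~\eqref{eq:t'n:(-1+p)} together with Lemma~\ref{lemma:pw:Tn'1}, but for the converse the paper does not assume knowledge of the valuation of $T'_n(-1)-1$ and instead bootstraps by induction on $t=1,\ldots,w$, at each step invoking Lemma~\ref{lemma:pw:Tn'1} with the exponent $s$ supplied by the inductive hypothesis to push $T'_n(p-1)\equiv T'_n(-1)$ from modulus $p^{s}$ to $p^{s+1}$. You short-circuit this induction by observing that $T'_n(-1)=n^2$ (via $n$ odd and Eq.~\eqref{eq:t'n1}) pins the exact valuation $v=E(n^2-1)$ from the outset, so a \emph{single} application of Lemma~\ref{lemma:pw:Tn'1} at level $v$ already yields $T'_n(p-1)\equiv T'_n(-1)\pmod{p^{v+1}}$ and hence equality of the two valuations. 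This is more symmetric and in fact proves the sharper statement $E(T'_n(p-1)-1)=E(T'_n(-1)-1)=E(n^2-1)$; the paper's inductive argument has the mild advantage of not relying on the explicit value of $T'_n(-1)$. Your index bookkeeping---trading $p^{m+1}/(m+1)!$ for $p^m/m!$ at the cost of one power of $p$, which is exactly absorbed by the ``$+2$'' in the conclusion of Lemma~\ref{lemma:pw:Tn'1}---is sound. One trivially missing case: if $n=1$ then $v=\infty$ falls outside your dichotomy, but there $T'_1\equiv 1$ identically and the lemma is vacuous.
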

\begin{proof}
As $T_n(x)$ is a polynomial of degree $n$, by Taylor's formula, one has
\begin{equation}\label{eq:t'n:(-1+p)}
 T'_n(-1+p)=T'_n(-1)+\sum^n_{m=2}\frac{ T^{(m)}_n(-1)\cdot p^{m-1}}{(m-1)!}.
\end{equation}
If $T'_n(-1)\equiv 1\pmod{p^w}$, $\frac{T^{(m)}_n(-1)\cdot p^m}{m!}$ 
is a multiple of $p^{w+2}$ from Lemma~\ref{lemma:pw:Tn'1}.
Then, $\frac{T^{(m)}_n(-1)\cdot p^{m-1}}{(m-1)!}$
is a multiple of $m\cdot p^{w+1}$ when $m\ge 2$.
Thus, $T'_n(p-1) \equiv 1\pmod{p^w}$, and the sufficient part of this Proposition is proved.

If $T'_n(p-1) \equiv 1\pmod{p^w}$, one can prove
\begin{equation}\label{eq:tn:temp}
T'_n(-1) \equiv 1\pmod{p^t}
\end{equation}
by mathematical induction on $t$.
When $t=1$, it follows from \eqref{eq:t'n:(-1+p)} that
$T'_n(p-1) \equiv T'_n(-1)\pmod{p}$, which means congruence \eqref{eq:tn:temp} holds for $t=1$.
Assume that congruence \eqref{eq:tn:temp} holds for $t=e\leq w-1$, i.e., $T'_n(-1) \equiv 1\pmod{p^e}$.
Then, by Lemma~\ref{lemma:pw:Tn'1}, one can get
$\frac{T^{(m)}_n(-1) \cdot p^{m}}{m!}$ is a multiple of $p^{e+2}$, yielding $\frac{T^{(m)}_n(-1) \cdot p^{m-1}}{(m-1)!} \equiv 0\pmod{p^{e+1}}$ when $m \geq 2$.
When $t=e+1$, using this congruence and \eqref{eq:t'n:(-1+p)}, one has $T'_n(p-1) \equiv T'_n(-1)\pmod{p^{e+1}}$.
Thus, congruence \eqref{eq:tn:temp} also holds for $t=e+1$.
This completes the proof of congruence \eqref{eq:tn:temp}, yielding $T'_n(-1) \equiv 1\pmod{p^w}$.
Consequently, the necessary part of this lemma is proved.
\end{proof}

\begin{lemma}\label{lemma:dTn:x}
When $m \geq 2$, the $m$-order derivative of the Chebyshev polynomial $T_n(x)$, $T^{(m)}_n(x)$, satisfies the recurrence relation
\begin{multline}\label{eq:dTnx:i}
   (x^2-1)T^{(m)}_n(x)=
  \left(n^2-(m-2)^2 \right)T^{(m-2)}_n(x)\\-(2m-3)xT^{(m-1)}_n(x),
\end{multline}
where $T^{(0)}_n(x)=T_n(x)$ and $T^{(1)}_n(x)=T'_n(x)$ represents the first derivative of $T_n(x)$.
\end{lemma}
\begin{proof}
The lemma is proved by mathematical induction on $m$.
Considering the relation between the two kinds of Chebyshev polynomials,
\begin{equation}\label{eq:relation}
T'_n(x)=nU_{n-1}(x),
\end{equation}
and
$U'_{n-1}(x)=\frac{nT_n(x)-xU_{n-1}(x)}{x^2-1}$,
one has
\(
T^{(2)}_n(x)=\frac{n^2T_n(x)-xT'_n(x)}{x^2-1}.
\)
Thus, \eqref{eq:dTnx:i} holds for $m=2$.
Assume that \eqref{eq:dTnx:i} holds for $m=e>2$.
When $m=e+1$, applying the quotient rule for differentiation, one obtains
$T^{(e+1)}_n(x)=\frac{dT^{(e)}_n(x)}{dx}=\frac{(n^2-(e-1)^2)T^{(e-1)}_n(x)}{x^2-1}-\frac{(2e-1)xT^{(e)}_n(x)}{x^2-1}$.
Hence, \eqref{eq:dTnx:i} holds for $m=e+1$.
This completes the proof by induction.
\end{proof}

\begin{lemma}\label{lemma:T'n(x):n}
Given an integer $s$ satisfying $s \not\equiv \pm 1\pmod{p}$,
if $T_n(s) \equiv s\pmod{p^w}$, then one has
$T'_n(s) \equiv \pm n\pmod{p^w}$,
where $w\in\mathbb{N}^+$.
\end{lemma}
\begin{proof}
From \cite[Theorem 2.20]{Lidl:Dick:1993} with $a=1$, one has
a Pell equation
\begin{equation}\label{eq:Tn2:unx}
T^2_n(x)-(x^2-1) U^2_{n-1}(x)=1.
\end{equation}
Since $T_n(s) \equiv s\pmod{p^w}$ and $p\nmid (s^2-1)$, it follows that $U^2_{n-1}(s) \equiv 1\pmod{p^w}$, which means that the order of $U_{n-1}(s)$ in the multiplicative group of the ring $\Zp{w}$ is 
$1$ or $2$. 
Hence,
\begin{equation}\label{eq:U}
U_{n-1}(s) \equiv \pm 1\pmod{p^w}.
\end{equation}
It follows from \eqref{eq:relation} that $T'_n(s) \equiv \pm n\pmod{p^w}$.
\end{proof}

\begin{lemma}\label{lemma:Tn:0pw}
Given $s \in \mathcal{Z}_p$, if \( T_n(s) \equiv s\pmod{p^w} \) and \(T'_n(s) \equiv 1\pmod{p^w} \), then
\(T'_n(s+hp) \equiv T'_n(s)\pmod{p^{w+1}} \)
and
\(\frac{T^{(m)}_n(s+hp) \cdot p^m}{m!} \equiv 0\pmod{p^{w+2}} \)
for \( m\ge 2 \), where
$w\in\mathbb{N}^+$ and $h\in\mathbb{Z}$.
\end{lemma}
\begin{proof}
Since \( T_n(x) \) is a polynomial of degree \( n \), by Taylor's formula one has
\begin{equation}\label{eq:tn(m)}
 T^{(m)}_n(s+x\cdot p)=T^{(m)}_n(s)+\sum^n_{i=m+1} \frac{(x\cdot p)^{i-m}}{(i-m)!} T^{(i)}_n(s)
\end{equation}
for any non-negative integer $m$ and $x \in \mathbb{Z}$.
Depending on the value of $s$, the proof of the lemma is divided into the following three cases:
\begin{itemize}[\newItemizeWidth]
\item $s=1$: Multiplying both sides of \eqref{eq:tn(m)} by \( \frac{p^m}{m!} \), one obtains
\begin{multline}\label{multi2sides}
\frac{T^{(m)}_n(s+x \cdot p) \cdot p^m}{m!}=
\frac{T^{(m)}_n(s) \cdot p^m}{m!}+\\
\sum^n_{i=m+1} x^{i-m} \binom{i}{m} \frac{T^{(i)}_n(s) \cdot  p^i}{i!}
\end{multline}
for any $m\in \mathbb{N}^+$.
By Lemma~\ref{lemma:pw:Tn'1}, one knows that
$\frac{T^{(i)}_n(1) \cdot  p^i}{i!} \equiv 0\pmod{p^{w+2}}$
for $i\ge 2$.
Substituting \( (s, x)=(1, h) \) into~\eqref{multi2sides}, one concludes that 
\(\frac{T^{(m)}_n(1+hp) \cdot p^m}{m!} \equiv \frac{T^{(m)}_n(1) \cdot p^m}{m!}\pmod{p^{w+2}} \) for $m\ge 1 $. 
It implies
\(T'_n(1+hp) \cdot p \equiv T'_n(1) \cdot p\pmod{p^{w+2}}\)
and \(\frac{T^{(m)}_n(1+hp) \cdot p^m}{m!} \equiv 0\pmod{p^{w+2}} \)
for \(m\ge 2\). The former is equivalent to \( T'_n(1+hp) \equiv T'_n(1)\pmod{p^{w+1}}\).

\item 
 \( s=p-1 \): Referring to Lemma~\ref{lemma:tn(-1+p):equal}, one has
\(T'_n(-1) \equiv 1\pmod{p^w} \) from \( T'_n(p-1) \equiv 1\pmod{p^w} \).
By Lemma~\ref{lemma:pw:Tn'1}, it follows that
\( \frac{T^{(i)}_n(-1) \cdot p^i}{i!}\equiv 0\pmod{p^{w+2}} \) for $i\ge 2$.
Substituting \( (s, x)=(-1, h+1) \) into~\eqref{multi2sides}, one obtains
\( \frac{T^{(m)}_n(p-1+hp) \cdot p^m}{m!} \equiv \frac{T^{(m)}_n(-1) \cdot p^m}{m!}\pmod{p^{w+2}}\)
for \(m\ge 1\).
This implies
\(
T'_n(p-1+hp) \cdot p\equiv T'_n(-1)\cdot p\pmod {p^{w+2}},
\) 
and
\( \frac{T^{(m)}_n(p-1+hp) \cdot p^m}{m!} \equiv 0\pmod{p^{w+2}}\)
for \(m\ge 2\). 
The former is equivalent to \( T'_n(p-1+hp) \equiv T'_n(-1)\pmod{p^{w+1}} \). 
Since this congruence holds for any $h\in\mathbb{Z}$,
one has $T'_n(p-1+hp) \equiv T'_n(p-1)\pmod{p^{w+1}}$.

\item \(s \notin \{1, p-1\} \): Referring to Lemma~\ref{lemma:T'n(x):n}, one has \( T'_n(s) \equiv \pm n\pmod{p^w} \), which implies \( n^2 \equiv 1\pmod{p^w} \) due to $T'_n(s) \equiv 1\pmod{p^w}$. Consequently, one has \( n^2 T_n(s)-s T'_n(s) \equiv T_n(s)-s T'_n(s) 
\equiv s-s\cdot 1\equiv 0\pmod{p^w} \). 
From Lemma~\ref{lemma:dTn:x} with $(m, x)=(2, s)$, it follows that
$(s^2-1) T^{(2)}_n(s)=n^2 T_n(s)-s T'_n(s)$, and therefore,
$(s^2-1) T^{(2)}_n(s)\equiv 0\pmod{p^w}$.
Since \( p\nmid (s^2-1)\) in this case, one can deduce that \( T^{(2)}_n(s) \equiv 0\pmod{p^w}\). 
Applying the same reasoning to Lemma~\ref{lemma:dTn:x} with $(m, x)=(3, s)$, one obtains $T^{(3)}_n(s)\equiv 0\pmod{p^w}$.
By induction on \(m\) in \eqref{eq:dTnx:i}, it follows that
\[
T^{(m)}_n(s) \equiv 0\pmod{p^w}
\]
for \(m\geq 2\).
Moreover, since \(v_p(x!)\leq \frac{x-1}{p-1}\), one finds that
$v_p\left(\frac{(hp)^{i-m}}{(i-m)!} T^{(i)}_n(s)\right)\ge i-m-\frac{(i-m)-1}{p-1}+w=\frac{(i-m)(p-2)+1}{p-1}+w$ for $i\ge 2$.
Thus, one concludes
\(
v_p\left(\frac{(hp)^{i-m}}{(i-m)!} T^{(i)}_n(s)\right)\ge 1+w 
\)
for $i\ge m+1$. 
Substituting $x=h$ into~\eqref{eq:tn(m)}, one can conclude that 
\(
T'_n(s+hp) \equiv T'_n(s)\pmod{p^{w+1}},
\)
and
\[
T^{(m)}_n(s+hp) \equiv 0\pmod{p^w}
\]
for \(m\ge 2\).
Finally, noting that $p\ge 5$, and $v_p(m!)=\frac{m-s_p(m)}{p-1}
\leq \frac{m-1}{4}$, one deduces that
$v_p\left(T^{(m)}_n(s+hp)\frac{p^m}{m!}\right)
\ge w+m-\frac{m-1}{4}=
w+\frac{3m+1}{4}$, which implies
$v_p\left(T^{(m)}_n(s+hp)\frac{p^m}{m!}\right)\geq w+2$
for $m\ge 2$ as $v_p(x)\in\mathbb{Z}$.
\end{itemize}
\end{proof}

\begin{proposition}\label{lemma:Tn(x):w}
Given $s\in\mathcal{Z}_p$, if $T_n(s)\equiv s\pmod p$ and $T'_n(s)\equiv 1\pmod{p^w}$, then one has
\begin{equation}\label{eq:tn:pt1}
T_n(s+hp) \equiv s+hp\pmod{p^{w+1}}, 
\end{equation}
where $h\in\mathbb{Z}$ and $w\in\mathbb{N}^+$.
\end{proposition}
\begin{proof}
Referring to \eqref{eq:tn(m)} with $(m, x)=(0, j)$, one obtains
$T_n(s+jp)=T_n(s)+j pT'_n(s)+\sum^n_{i=2}\frac{j^i\cdot p^i}{i!}T^{(i)}_n(s)$,
where $j\in\mathbb{Z}$.
Now, assuming that $T_n(s)\equiv s\pmod{p^{w+1}}$, 
it follows directly that $T_n(s)\equiv s\pmod{p^w}$.
Combining this with the second condition stated in this proposition, one can get
the third term in the above expression of $T_n(s+jp)$ is a multiple of 
$p^{w+2}$ from Lemma~\ref{lemma:Tn:0pw} with $h=0$.
Furthermore, since $T'_n(s)\equiv 1\pmod{p^w}$, one deduces that 
$j pT'_n(s)\equiv j p\pmod{p^{w+1}}$.
Thus, it follows that (\ref{eq:tn:pt1}) holds
for any $h, w$ within the prescribed range. Consequently,
it is sufficient to consider the assumed condition, i.e., the case $h=0$, for this proposition.

Depending on the value of $s$, the proof of congruence~\eqref{eq:tn:pt1} with $h=0$ is divided into the following three cases:
\begin{itemize}[\newItemizeWidth]
    \item $s\in\{0, 1\}$: One has $T_n(s)=s$ from \eqref{eq:recursive} as $n$ is odd.
    Thus, $T_n(s) \equiv s\pmod{p^{w+1}}$.
    
    \item $s=p-1$: One has $T'_n(-1)\equiv 1\pmod{p^w}$ from Lemma~\ref{lemma:tn(-1+p):equal}. Referring to \eqref{eq:tn(m)} with $(m, x, s)=(0, 1, -1)$ and Lemma~\ref{lemma:pw:Tn'1}, one can get
    $T_n(-1+p)=T_n(-1)+pT'_n(-1)+\sum^n_{i=2}\frac{p^i}{i!}T^{(i)}_n(-1)\equiv T_n(-1)+ p\pmod{p^{w+1}}$.
    Since $T_n(1)=1$ and $T_n(x)$ is an odd function, one has $T_n(-1)=-1$. Therefore, $T_n(p-1) \equiv p-1\pmod{p^{w+1}}$.
    
    \item $s\notin\{0, 1, p-1\}$: 
    From $T_n(s)\equiv s\pmod p$, it follows that
    \begin{equation}\label{eq:unep}
    U_{n-1}(s)\equiv \pm 1\pmod p
    \end{equation}
    by \eqref{eq:U} and its precondition with $w=1$.
    As for either case of \eqref{eq:unep}, one can prove that
    \begin{equation}\label{eq:tn:pt}
    T_n(s) \equiv s\pmod{p^t}
    \end{equation}
    holds for $t\leq w+1$ by mathematical induction on $t$.
    When $t=1$, congruence~\eqref{eq:tn:pt} is given by the proposition's condition.
    Assume congruence~\eqref{eq:tn:pt} holds for $t=e<w+1$, i.e., $T_n(s)\equiv s\pmod{p^e}$.
    Then, 
    \begin{equation}\label{eq:unpe}
    U_{n-1}(s)\equiv \pm 1\pmod{p^e}
    \end{equation}
  by \eqref{eq:U}, similar to deriving \eqref{eq:unep}.
  The remaining part of this proof is to prove congruence~\eqref{eq:tn:pt} holds when $t=e+1$ to complete the induction.
\end{itemize}

If the positive case of \eqref{eq:unep} holds, one multiplies both sides of \eqref{eq:s:sunx} with $x=s$ by $(s-\sqrt{s^2-1}) $, yielding the relation
\begin{equation}\label{eq:ss2-1n-1}
\left(s+\sqrt{s^2-1}\right)^{n-1}=c_1+c_2\sqrt{s^2-1},
\end{equation}
where 
\begin{equation}\label{eq:c1c2}
\begin{cases}
c_1=sT_n(s)-(s^2-1)U_{n-1}(s);\\
c_2=sU_{n-1}(s)-T_n(s).
\end{cases}
\end{equation}
Since $U_{n-1}(s)\equiv 1\pmod p$, one has $U_{n-1}(s)\equiv 1\pmod{p^e}$
from \eqref{eq:unpe}.
Therefore, one deduces that $c_1\equiv 1\pmod{p^e}$ and $c_2\equiv 0\pmod{p^e}$.
Furthermore, from the second condition
of this proposition, one obtains $T_n'(s)\equiv 1\pmod{p^e}$ as $e\le w$. This implies
$n\equiv 1\pmod{p^e}$ 
by \eqref{eq:relation}, leading to $n-1=b\cdot p^e$.
Let $(s+\sqrt{s^2-1})^b=d_1+d_2\sqrt{s^2-1}$.
Then, from \eqref{eq:ss2-1n-1}, one has
$
\left(s+\sqrt{s^2-1}\right)^{n-1}=\left(d_1+d_2\sqrt{s^2-1}\right)^{p^e}
$,
yielding
\[
\begin{cases}
c_1=\sum\limits_{i=1}^{p^*}\binom{p^e}{2i}d_2^{2i} 
d_1^{p^e-2i}(s^2-1)^i+d_1^{p^e};\\
c_2=\sum\limits_{i=1}^{p^*}\binom{p^e}{2i-1} d_2^{2i-1} d_1^{p^e-2i+1}(s^2-1)^{i-1} +d_2^{p^e} (s^2-1)^{p^*},
\end{cases}
\]
where $p^*=\frac{p^e-1}{2}$.

Applying Kummer’s theorem~\cite{Mihet:2010:Kummer}, one has $v_p\left(\tbinom{p^e}{j}\right)=e-v_p(j)$, which
is greater than or equal to one for $1\le j\leq p^e-1$.
This implies that $p \mid \binom{p^e}{2i-1}$ in the first expression term of $c_2$,
implying $c_2\equiv d_2^{p^e}\cdot (s^2-1)^{p^*}\pmod{p}$. 
Since $c_2\equiv 0\pmod{p}$ and $p\nmid (s^2-1)$, one has $p\mid d_2$.
Moreover, since $v_p(x)\leq\log_p(x)\leq x-1$ for $x\ge 1$ and 
$\ln(p)\ge 1$, it follows that $v_p\left(\tbinom{p^e}{j}\cdot p^j\right)=e-v_p(j)+j\ge e+1$ for any $j$.
Thus, one obtains 
\begin{equation}\label{eq:c2}
\begin{cases}
c_1\equiv d_1^{p^e}\pmod{p^{e+1}};\\
c_2\equiv 0\pmod{p^{e+1}},
\end{cases}
\end{equation}
as $p^e>e+1$.
Given that $c_1\equiv 1\pmod{p^e}$, one concludes that $d_1^{p^e}\equiv 1\pmod{p^e}$, indicating that the order of $d_1$ in $\mathbb{Z}^*_{p^e}$, the multiplicative group of the ring $\Zp{e}$, is a power of $p$.
Referring to \cite[Corollary 14.12]{Wan:Lecture:2003}, 
the order of any element in $\mathbb{Z}^*_{p^e}$ takes the form $k_1\cdot k_2$, where $k_1\mid (p-1)$ and $k_2\mid p^{e-1}$. 
Hence, one has $k_1=1$, and the order of \(d_1\) in $\mathbb{Z}^*_{p^e}$ divides \(p^{e-1}\), implying that $d_1^{p^{e-1}}\equiv 1\pmod{p^e}$, i.e.,
$d_1^{p^{e-1}}=1+c\cdot p^e$.
Thus, one derives
\begin{equation*}
c_1\equiv (1+c\cdot p^e)^{p} \equiv 1\pmod{p^{e+1}}
\end{equation*}
from \eqref{eq:c2}.
Finally, by substituting the expressions for $c_1$ and $c_2$ from \eqref{eq:c1c2} into the preceding congruence and the second congruence in \eqref{eq:c2}, respectively, one obtains 
\[
T_n(s) \equiv s\pmod{p^{e+1}}.
\]
In the negative case of \eqref{eq:unep}, one multiplies both sides of \eqref{eq:s:sunx} with $x=s$ by $(s+\sqrt{s^2-1})$ 
instead. Then, \eqref{eq:ss2-1n-1} becomes
$(s+\sqrt{s^2-1})^{n+1}=c'_1+c'_2\sqrt{s^2-1}$, 
where 
$c'_1=sT_n(s)+(s^2-1)U_{n-1}(s)$ and $c'_2=sU_{n-1}(s)+T_n(s)$ are both integers.
The proof for this case follows the same reasoning as in the positive case and is therefore omitted here.
\end{proof}

\subsection{Explicit Expression of the Least Period of the Chebyshev Integer Sequence}\label{subsec:cheby:integer}

Given $s\in\mathcal{Z}_{p^k}$, one can know a Chebyshev integer sequence $\{T_n^i(s)\bmod p^k\}_{i\ge 0}$ is periodic from the permutation property of the Chebyshev polynomial over the ring $\mathbb{Z}_{p^k}$. 
Under the assumption that there exists a positive integer $v\ge 2$
such that
\begin{equation}\label{vthreshold1}
\begin{cases}
T_n^{N_s\cdot l_s}(s)\equiv s\pmod{p^v},\\
T_n^{N_s\cdot l_s}(s)\not\equiv s\pmod{p^{v+1}}, 
\end{cases}
\end{equation}
the least period of the Chebyshev integer sequence was discussed in~\cite[Theorem 2]{Yoshioka:ChebyshevPk:TCAS2:2018}, 
where $N_x$ is the least period of the sequence $\{T_n^i(x)\bmod p\}_{i\ge 0}$ and
\begin{equation}\label{eq:lx:error}
l_x=
\begin{cases}
p             & \mbox{if } T'_{n^{N_x}}(x) \equiv 1\pmod p; \\
\ord(T'_{n^{N_x}}(x)) & \mbox{if } T'_{n^{N_x}}(x) \not\equiv 1\pmod p.
\end{cases}
\end{equation}
We hypothesize that the purpose of setting $l_s=p$
in \eqref{eq:lx:error} when $T'_{n^{N_s}}(s) \equiv 1\pmod{p}$ is to ensure $v\ge 2$ in condition~(\ref{vthreshold1}).
In fact, when $T'_{n^{N_s}}(s) \equiv 1\pmod p$,
it follows that \(\ord(T'_{n^{N_s}}(s))=1\); \(T_{n^{N_s}}(s) \equiv s\pmod{p^2}\) from
the definition of $N_x$ and Proposition~\ref{lemma:Tn(x):w}, which means \(v \geq 2\) in condition~(\ref{vthreshold1}) with \(l_s=1\). Thus, we correct \eqref{eq:lx:error} to
\begin{equation*}
l_x=\ord(T'_{n^{N_x}}(x))
\end{equation*}
regardless of the input parameter of \(\ord(\cdot)\). 

When \(s \not\equiv \pm 1\pmod{p}\), since \(T_{n^{N_s}}(s) \equiv s\pmod{p}\), it follows from Lemma~\ref{lemma:T'n(x):n} with $w=1$ that \(T'_{n^{N_s}}(s) \equiv \pm n^{N_s}\pmod{p}\).
When $s\equiv1\pmod{p}$, one has $T_n(s)\equiv 1\pmod p$ by induction on the degree of the Chebyshev polynomial from zero. 
When $s\equiv-1\pmod{p}$, one has $T_n(s)\equiv -1\pmod p$ in the same way as $n$ is odd.
In either case, one has $T_n(s)\equiv s\pmod p$, implying $N_s=1$.
It further deduces $T'_{n^{N_s}}(s)\equiv n^2\pmod{p}$ from \eqref{eq:pm12}.
Since \(\gcd(n, p)=1\), it follows that
\(p \nmid T'_{n^{N_s}}(s)\), which
justifies the definition of \(l_x\).

In general, the least period of the sequence \(\{T_n^i(s) \mod p^k\}_{i \geq 0}\) increases monotonically
with respect to \(k\) for any non-trivial initial state.
Furthermore, Lemma~\ref{lemma:period:change} establishes that this period strictly increases once \(k\) surpasses a certain threshold.
Lemma~\ref{lemma:TniN} provides the explicit form of any iteration of \(T_n^{N\cdot h}(x)\) over $\Zp{w+1}$ employing the geometric series of its derivative.
Lemma~\ref{lemma:T'l} addresses the commutation between the
derivative and the exponent of \(T_n^N(x)\) with respect to modulo \(p\).
Lemma~\ref{lemma:xmodp2} describes the congruence relation between a geometric series on certain specific integers and \(p^2\).
Finally, with the aid of these four lemmas and Proposition~\ref{lemma:Tn(x):w}, 
one can establish a rigorous statement regarding the least period of the Chebyshev integer sequence iterated from any non-trivial initial state in \(\mathcal{Z}_{p^k}\) in Theorem~\ref{Theorem:periodValue},
where the three key parameters $N_s$, $l_s$, and $v$ are determined by the given conditions in turn.

\begin{lemma}\label{lemma:period:change}
Given $s\in\mathcal{Z}_{p^k}\setminus\{0, 1\}$ and $w\in\mathbb{N}^+$,
the least period of the sequence $\{T_n^i(s)\bmod p^k\}_{i\ge 0}$
is greater than \(N\) when
$k>\max\{e \vvert T_n^N(s)\equiv s\pmod{p^e}\}$,
where $N$ is the least period of the sequence $\{T_n^i(s)\bmod p^w\}_{i\ge 0}$.
\end{lemma}
\begin{proof}
Let $g(x)=T_{n^N}(x)-x$. From \eqref{eq:relation}, one has 
$g'(x)=n^N U_{n^N-1}(x)-1$.
Since $U_{n^N-1}(x)=\sum_{k=0}^{\lfloor (n^N-1)/2 \rfloor}\binom{n^N}{2k+1}(x^2-1)^k x^{n^N-1-2k}$ as stated in \cite[Theorem 1]{2019NotesOE},
it yields that $g(x)>0$ for $x>1$, given that $g(1)=0$, and $g'(x)>0$ when $x>1$.
Thus, one concludes that $T^N_n(s)=T_{n^N}(s)>s$ for any $s\in\mathcal{Z}_{p^k}\setminus\{0, 1\}$, meaning that $w_s=\max\{e\vvert T_n^N(s)\equiv s\pmod{p^e}\}$ is finite.
By the definition of \(N\), one has \(T_n^N(s) \equiv s\pmod{p^w}\), i.e.,
\(T_n^N(s)=s + c \cdot p^w\).
Furthermore, it follows that \(w_s \geq w\) and \(c=c_p \cdot p^{w_s-w}\)
by the definition of \(w_s\), 
where $\gcd(c_p, p)=1$.

Let $N^*$ denote the least period of the sequence $\{T_n^i(s)\bmod p^k\}_{i\ge 0}$, where $k>w_s$.
Assume that $N^*<{N}$, one has $T_n^{N^*}(s)\equiv s\pmod{p^w}$
since $k>w$, implying that
$N^*$ is the period of the sequence $\{T_n^i(s)\bmod p^w\}_{i\ge 0}$.
However, this contradicts the definition of \(N\). Therefore, \(N^* \geq N\).
Assuming \(N^*=N\), one would have \(T_n^N(s) \equiv s\pmod{p^k}\).
Moreover, since \(T_n^N(s)=s+c_p \cdot p^{w_s}\) and \(k > w_s\), it follows that
\(p \mid c_p\), which contradicts the fact that \(\gcd(c_p, p)=1\).
Hence, one concludes that $N^* > N$, thereby proving the lemma.
\end{proof}

\begin{lemma}\label{lemma:TniN}
Given two positive integers $s$ and $w$, if \begin{equation}\label{assumec}
 \begin{cases}
 T^N_n(s)\equiv s\pmod{p^w}, \\
 T^N_n(s)\not\equiv s\pmod{p^{w+1}}, 
 \end{cases}
\end{equation}
then
\begin{equation}\label{ch:eq:Tinx}
T^{N \cdot h}_n(s)\equiv s+b\cdot p^w \sum_{j=0}^{h-1}(T'_{n^N}(s))^j
\pmod{p^{w+1}}, 
\end{equation}
where $h, N, w\in\mathbb{N}^+$, $b\in\mathbb{Z}$, and $b\nmid p$.
Specifically, the base and the residue in the preceding congruence~\eqref{ch:eq:Tinx} remain congruent modulo $p^{w+2}$ if $w\geq 2$.
\end{lemma}
\begin{proof}
The congruence~(\ref{ch:eq:Tinx}) is proved via mathematical induction on $h$.
When $h=1$, one has
$T^N_n(s)\equiv s+b\cdot p^w\pmod{p^{w+1}}$
from condition~(\ref{assumec}).
Thus, congruence~(\ref{ch:eq:Tinx}) holds for $h=1$.
Suppose that congruence~(\ref{ch:eq:Tinx}) holds for $h=e>1$, namely,
$T^{N \cdot e}_n(s)\equiv s+B\pmod{p^{w+1}}$, where
$B=b\cdot p^w \sum_{j=0}^{e-1}(T'_{n^N}(s))^j$.
When $h=e+1$, one computes
$T^{{N \cdot (e+1)}}_n(s)=T^N_n(T^{N \cdot e}_n(s))\equiv T^N_n(s+B)\pmod{p^{w+1}}$.
Applying Taylor's formula, one obtains
\begin{equation}\label{eq:TnNx+b}
T^N_n(s+B)=T^N_n(s)+\sum^{n^N}_{i=1}B^{i}\frac{T^{(i)}_{n^N}(s)}{i!}.
\end{equation}
Since $p^w\mid B$, $p^{w+1}\mid B^i$ for any $i\ge 2$. Moreover, $\frac{T^{(i)}_{n^N}(s)}{i!}$ is an integer.
Hence,
\begin{equation*}
\begin{aligned}
T^{N\cdot (e+1)}_n(s) & \equiv 
s+b\cdot p^w+B\cdot T'_{n^N}(s)\pmod{p^{w+1}} \\
& \equiv s+b\cdot p^w \sum_{j=0}^e(T'_{n^N}(s))^j\pmod{p^{w+1}}.
\end{aligned} 
\end{equation*}
Therefore, congruence~(\ref{ch:eq:Tinx}) also holds for $h=e+1$,
completing the induction and thus proving congruence~(\ref{ch:eq:Tinx}).

From condition~(\ref{assumec}), one has $T^N_n(s)\equiv s+b'\cdot p^w\pmod{p^{w+2}}$, where $b'\nmid p$.
If $w\ge 2$, one has $iw\ge w+2$ for any $i\ge 2$.
Therefore, $p^{w+2}$ can divide $B^i$ in \eqref{eq:TnNx+b} also for any $i\ge 2$.
Applying the same inductive framework as above, one can prove that the base and the residue in congruence~\eqref{ch:eq:Tinx}, with $b=b'$ remain congruent modulo $p^{w+2}$ if $w\geq 2$.
\end{proof}

\begin{lemma}\label{lemma:T'l}
Given a non-negative integer $s$, if $T^N_n(s)\equiv s\pmod{p}$,
then $(T^{N\cdot h}_n(s))'\equiv (T'_{n^N}(s))^h\pmod p$ holds for any positive integer $h$.
\end{lemma}
\begin{proof}
The lemma can be obtained by replacing $n$ in~\cite[Lemma 6]{Yoshioka:ChebyshevPk:TCAS2:2018} with $n^N$.
\end{proof}

\begin{lemma}
\label{lemma:xmodp2}
For any non-negative integer $s$ satisfying $s\equiv 1\pmod p$, one has
\begin{equation}\label{eq:xmodp2}
\sum_{i=0}^{p-1}s^i\equiv p\pmod{p^2}
\end{equation}
and
$p=\min\{h \vvert \sum_{i=0}^{h-1}s^i\equiv 0\pmod p, h\in \mathbb{N}^+\}$.
\end{lemma}
\begin{proof}
Since $s\equiv 1\pmod p$, one has $s=c\cdot p+1$ and
$s^i=(c\cdot p+1)^i$.
Then, $s^i = (c\cdot p+1)^i \equiv (1+i\cdot c\cdot p) \pmod {p^2}$. 
It means $\sum_{i=0}^{j-1}s^i \equiv \sum_{i=0}^{j-1}(1+i\cdot c\cdot p) \pmod {p^2}$, 
yielding $\sum_{i=0}^{j-1}s^i \equiv j+\frac{j\cdot (j-1)}{2}\cdot c\cdot p\pmod{p^2}$.
Setting $j=p$, one can obtain congruence~\eqref{eq:xmodp2}.
Furthermore, from the preceding congruence, one can verify that $p$ is the least positive integer $j$ such that
$\sum_{i=0}^{j-1}s^i\equiv 0\pmod p$.
\end{proof}

\begin{theorem}\label{Theorem:periodValue}
Given $s\in\mathcal{Z}_{p^k}\backslash \{0, 1\}$, the least period of sequence $\{T_n^i(s)\bmod p^k\}_{i\ge 0}$ is $L(n, p, k, s)=N_s\cdot l_s\cdot p^{k-v}$ when $k\ge v$, where $N_s$ is the least period of sequence $\{T_n^i(s)\bmod p\}_{i\ge 0}$, 
$l_s$ is the least positive number $j$ such that $(T'_{n^{N_s}}(s))^j\equiv 1\pmod p$,
and
$v=\max\{e \vvert T_n^{N_s\cdot l_s}(s)\equiv s\pmod{p^e}\}.$
\end{theorem}
\begin{proof}
From the definition of $N_s$, one has $T^{N_s}_n(s)\equiv s\pmod{p}$. Referring to Lemma~\ref{lemma:period:change}, 
there exists a positive integer $w_s$ such that
\begin{equation*}
 \begin{cases}
 T^{N_s}_n(s)\equiv s\pmod{p^{w_s}};\\
 T^{N_s}_n(s)\not\equiv s\pmod{p^{{w_s}+1}},
 \end{cases}
\end{equation*}
namely $w_s=\max\{e \vvert T_n^{N_s}(s)\equiv s\pmod{p^e}\}$,
for a given $N_s$ according to any $s$.
Due to the minimality of $N_s$, one has $L(n, p, k, s)=N_s$ when $k\leq w_s$.

If $l_s\neq 1$, setting $(N, w, h)=(N_s, w_s, l_s)$ in Lemma~\ref{lemma:TniN}, one obtains 
\begin{equation}\label{eq:TNss}
T^{N_s\cdot l_s}_n(s)\equiv s+b\cdot p^{w_s} \sum_{j=0}^{l_s-1}(T'_{n^{N_s}}(s))^j\pmod{p^{w_s+1}}.
\end{equation}
By definition, $l_s$ is the least integer $j$ such that
\[
(T'_{n^{N_s}}(s))^{j}-1=(T'_{n^{N_s}}(s)-1)\sum_{j=0}^{j-1}(T'_{n^{N_s}}(s))^j \equiv 0\pmod{p},
\]
which means \(\sum_{j=0}^{l_s-1}(T'_{n^{N_s}}(s))^j \equiv 0\pmod{p}\) since $l_s\neq 1$ implies
$(T'_{n^{N_s}}(s)-1)\not\equiv 0\pmod{p}$.
Thus, one deduces
\begin{equation}\label{eq:ws1}
T^{N_s\cdot l_s}_n(s)\equiv s\pmod{p^{w_s+1}}
\end{equation}
from \eqref{eq:TNss}.
Referring to Lemma~\ref{lemma:period:change}, there exists an integer $v$ such that
\begin{equation}\label{vthreshold}
\begin{cases}
T_n^{N_s\cdot l_s}(s)\equiv s\pmod{p^v}; \\
T_n^{N_s\cdot l_s}(s)\not\equiv s\pmod{p^{v+1}}.
\end{cases}
\end{equation}
If $l_s=1$, i.e. $T'_{n^{N_s}}(s)\equiv 1 \pmod{p}$,
from $T_{n^{N_s}}(s)\equiv s\pmod p$ and Proposition~\ref{lemma:Tn(x):w} with $(w, h)=(1, 0)$, one has $T_{n^{N_s}}(s)=T^{N_s}_n(s)\equiv s\pmod{p^2}$,
implying $w_s\ge 2$.
In either case, condition~\eqref{vthreshold} holds,
where
\[
\begin{cases}
v=w_s       & \mbox{if } l_s=1; \\
v\ge w_s+1  & \mbox{if } l_s\neq 1,
\end{cases}
\]
and $v\geq 2$ as $w_s\in\mathbb{N}^+$.
By the minimality of $l_s$, one concludes that $L(n, p, k, s)=N_s \cdot l_s$ for $w_s < k \le v$ if $l_s \neq 1$. If $l_s=1$, the same expression holds for $k \le v$.

When $k\ge v$, one can state the theorem as
\begin{equation}
\min\{i \mid T^i_n(s)\equiv s\pmod{p^{v+t}}, i\in \mathbb{N}^+ \}=N_s\cdot l_s\cdot p^t,
\label{eq:theorem1}
\end{equation}
where $t=k-v$. 
As shown in the previous discussion, there is a tight connection between \eqref{eq:theorem1}
and
\begin{equation}\label{eq:change:pi}
 \begin{cases}
 T^{N_s\cdot l_s\cdot p^t}_n(s)\equiv s\pmod{p^{v+t}}; \\
 T^{N_s\cdot l_s\cdot p^t}_n(s)\not\equiv s\pmod{p^{v+t+1}}.
 \end{cases}
\end{equation}
One now proceeds to prove them together by induction on $t$.
From the analysis of the case $k=v$, one sees that \eqref{eq:theorem1} and 
condition~\eqref{eq:change:pi} hold for $t=0$.
Assume that they both hold for $t=e\in[1, k-v)$, i.e.,
$\min\{i \mid T^i_n(s)\equiv s\pmod{p^{v+e}}, i\in \mathbb{N}^+\}=N_s\cdot l_s\cdot p^e$,
and
\begin{equation}\label{eq:change:pe}
 \begin{cases}
 T^{N_s\cdot l_s\cdot p^e}_n(s)\equiv s\pmod{p^{v+e}};\\
 T^{N_s\cdot l_s\cdot p^e}_n(s)\not\equiv s\pmod{p^{v+e+1}}.
 \end{cases}
\end{equation} 
Setting $(N, w)=(N_s\cdot l_s\cdot p^e, v+e)$ in Lemma~\ref{lemma:TniN}, it follows from \eqref{eq:change:pe} that
\begin{multline}
\label{eq:Tni:Nlx}
T^{N_s\cdot l_s\cdot p^e \cdot h}_n(s)\equiv\\
s+ 
b'\cdot p^{v+e} \sum_{j=0}^{h-1}(T'_{n^{N_s\cdot l_s\cdot p^e}}(s))^j\pmod{p^{v+e+2}},
\end{multline}
where $v+e>2$ and $b'\nmid p$.
One verifies from Lemma~\ref{lemma:T'l} with $(N, h)=(N_s, p^e \cdot l_s)$
that
$T'_{n^{N_s\cdot p^e\cdot l_s}}(s)\equiv (T'_{n^{N_s}}(s))^{p^e\cdot l_s}\pmod p$.
From the definition of $l_s$, it follows that
$T'_{n^{N_s\cdot l_s\cdot p^e}}(s)\equiv 1\pmod p$.
By Lemma~\ref{lemma:xmodp2}, one has
\begin{equation}\label{eq:tn2p}
\sum_{j=0}^{p-1}(T'_{n^{N_s\cdot l_s\cdot p^e}}(s))^j\equiv p\pmod{p^2}
\end{equation}
and
$
p=\min\{h \vvert \sum_{j=0}^{h-1}(T'_{n^{N_s\cdot l_s\cdot p^e}}(s))^j\equiv 0\pmod p, h\in \mathbb{N}^+ \}
$.
Adopting the relation on $T'_{n^{N_s\cdot l_s\cdot p^e}}(s)$ in \eqref{eq:tn2p} and the preceding equation into congruence~\eqref{eq:Tni:Nlx} with $h=p$,
one can conclude that \eqref{eq:theorem1} and condition~\eqref{eq:change:pi} both hold for $t=e+1$.
The induction thus completes their proof.
\end{proof}

The periods of the sequence $\{T_n^i(s)\bmod p^k\}_{i\ge 0}$ with $(n, p)\in\{(29, 7), (19, 13)\}$ and $s\in\{0, 1, \cdots, 5\}$ are presented in Table~\ref{table:period:n29p7}, where 
the values of intermediate $l_s$ are also listed.
When $s\in\{3, 4\}$, one observes via numerical simulation that
$L(29, 7, 3, s)=14$ and $L(29, 7, 4, s)=98$ via numerical simulation.
According to the theory analysis, one can compute $(N_s, l_s, v)=(1, 2, 2)$, which further leads to the conclusion that $L(29, 7, k, s)=2\cdot 7^{k-2}$ for any $k\geq 2$, by applying Theorem~\ref{Theorem:periodValue}.
Similarly, for $s=3$, one finds that $L(19, 13, 5, 3)=156$ by numerical simulation. Alternatively, by calculating $(N_s, l_s, v)=(3, 4, 4)$, one deduces that $L(19, 13, k, 3)=12\cdot 13^{k-4}$ for any $k\geq 4$, again utilizing Theorem~\ref{Theorem:periodValue}. In summary, the data in Table~\ref{table:period:n29p7} align with the predictions of Theorem~\ref{Theorem:periodValue}.

From \eqref{eq:recursive} and \eqref{PermuteCondition}, one has $T_n(s)=s$ when $s \in \{0, 1\}$. Hence, the period of the Chebyshev integer sequence, when starting from $s\in\{0, 1\}$, remains invariant for any pair $(n, p)$.
Drawing on the proof of Theorem~\ref{Theorem:periodValue}, one may observe 
two more key rules governing the periodic behavior of the Chebyshev integer sequence $\{T_n^i(s)\bmod p^k\}_{i\ge 0}$ as the parameter $k$ increases:
\begin{itemize}[\newItemizeWidth]
\item If $l_s\neq 1$, the period of the Chebyshev integer sequence increases by a factor of $l_s$ as $k$ increases, up to a certain threshold depending on $s$. Beyond another threshold, the period expansion ratio stabilizes at $p$. 
The period remains unchanged when $k$ is between the two different thresholds.
The data for $s \in \{3, 4\}$ when $(n, p)=(29, 7)$, and for 
$s\in \{2, 3, 4, 5\}$ when $(n, p)=(19, 13)$, shown in Table~\ref{table:period:n29p7}, follow this rule.

\item If $l_s=1$, the period expansion ratio becomes constant at $p$ when $k$ is greater than a threshold.
The data for $s \in \{2, 5\}$ when $(n, p)=(29, 7)$, as listed in Table~\ref{table:period:n29p7}, exemplify this rule.
\end{itemize}

\setlength\tabcolsep{2pt} 
\begin{table}[!htb]
    \caption{The least period of sequence $\{T_n^i(s)\bmod p^k\}_{i\ge 0}$.}
    \centering 
    \begin{tabular}{*{12}{c|}c} 
    \hline 
    $(n,p)$ & \multicolumn{6}{c|}{$(29,7)$} & \multicolumn{6}{c}{$(19,13)$} \\ \hline
    \diagbox[width=36pt]{Period}{$s$}
    	&0	&1	&2	&3	&4	&5	
     &0	&1	&2	&3	&4	&5 \\ \hline
    $l_s$	&1	&1	&1	&2	&2	&1 
    &12&6	&6	&4	&4	&4\\ \hline
    $k=1$	&1	&1	&2	&1	&1	&2	
    &1	&1	&2	&3	&3	&3\\ \hline
    $k=2$	&1	&1	&2	&2	&2	&2	 
    &1	&1	&12	&3	&12	&12\\ \hline
    $k=3$	&1	&1	&14	&14	&14	&2	
    &1	&1	&12	&12	&12	&12\\ \hline
    $k=4$	&1	&1	&98	&98	&98	&14 
    &1	&1	&156&12	&156	&156\\ \hline
    $k=5$	&1	&1	&686&686&686	&98	 
    &1	&1	&2028&156	&2028	&2028\\ \hline
    $k=6$	&1	&1	&4802&4802&4802	&686 
    &1	&1	&26364	&2028	&26364	&26364\\ 
    \hline
    \end{tabular}
\label{table:period:n29p7}
\end{table}

\section{Functional graph of Chebyshev Permutation Polynomial over ring $\mathbb{Z}_{p^k}$}
\label{sec:network}

Let $G_{p^k}$ denote the associated \emph{functional graph}
of a Chebyshev permutation polynomial over $\Zp{k}$. 
This graph can be constructed as follows: every number in $\mathcal{Z}_{p^k}$ is considered a vertex, and a vertex $x$ is directly linked to a vertex $y$ if and only if $y\equiv T_n(x)\pmod{p^k}$ \cite{cqli:network:TCASI2019}.
This section introduces some basic properties of $G_{p^k}$ and discloses its evolution rules concerning $k$.

\begin{figure*}[!htb]
    \centering
    \begin{minipage}[t]{0.65\twofigwidth}
    \centering
    \raisebox{0.15\twofigwidth}{\hfill
    \includegraphics[width=0.7\twofigwidth]{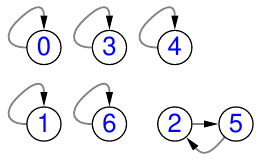}}
    a)
    \end{minipage} \hspace{4em}
    \begin{minipage}[t]{1.6\twofigwidth}
    \centering
    \includegraphics[width=2\twofigwidth]{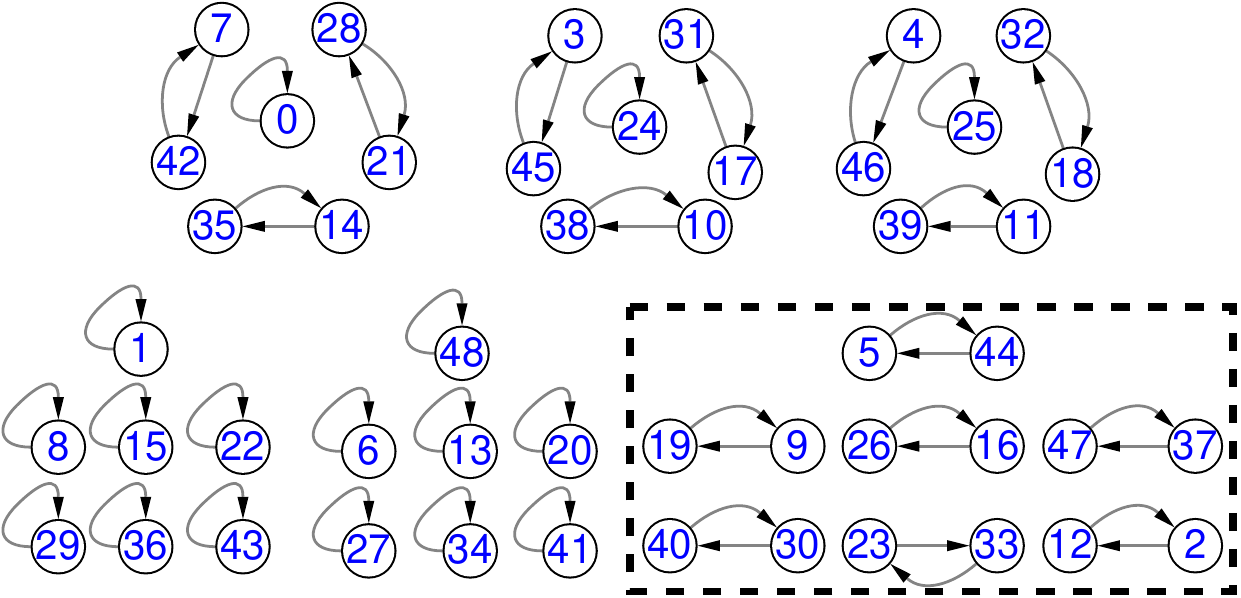}
    b)
    \end{minipage}\vspace{0.2em}
    \begin{minipage}{1.35\BigOneImW}
    \centering
    \includegraphics[width=1.35\BigOneImW]{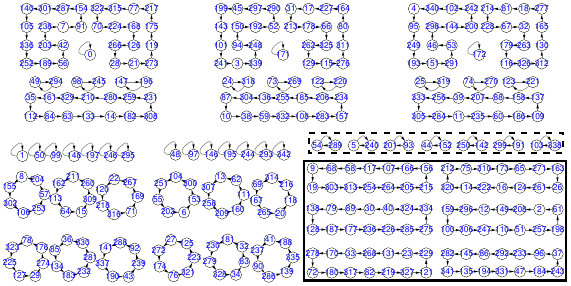}
    c)
    \end{minipage}
\caption{Functional graphs of Chebyshev polynomials over $\Zp{k}$ with $(n, p)=(13, 7)$: a) $k=1$; b) $k=2$; c) $k=3$.}
\label{fig:SMNcheby:n13p7}
\end{figure*}

\subsection{Some Basic Properties of $G_{p^k}$}

As a typical example, Fig.~\ref{fig:SMNcheby:n13p7} illustrates $G_{p^k}$ with $(n, p)=(13, 7)$ and $k\in\{1, 2, 3\}$, 
demonstrating some general rules of the functional graph
of a Chebyshev permutation polynomial over $\Zp{k}$:
\begin{itemize}[\newItemizeWidth]
    \item The graph $G_{p^k}$ is composed of some directed cycles without any transients, as shown in Proposition~\ref{prop:partcircle}.
    
    \item States $s$ and $p^k-s$ belong to cycles of the same length, as shown in Proposition~\ref{prop:symmetry}.
    
    \item There are some self-loops (i.e., a vertex has a directed edge that links to itself) for arbitrary parameter $k$, as shown in Proposition~\ref{prop:specialPoint2}.
    
    \item For each cycle of $G_p$, containing a node of value $s$, the functional graph $G_{p^k}$ contains a cycle of the same length, of which a node can be generated from $s$ in a fixed procedure when $k\ge 2$. Propositions~\ref{pro:Tnx=x:1} and \ref{pro:Tnx=x:2} discuss the existence of fixed point of $T_n(x)$ over $\Zp{k}$ for the two cases $T'_n(s)\equiv 1\pmod p$ and $T'_n(s)\not\equiv 1\pmod p$, respectively.
\end{itemize}

Note that Propositions~\ref{pro:Tnx=x:1} and \ref{pro:Tnx=x:2} can be attributed to the lifting theorem for the general polynomial congruence with prime power moduli given in \cite[Theorem 5.30]{Ap76:analyticnumbertheory}.
Additionally, there are some short cycles in $G_{p^k}$ for any parameter $k$, 
which may cause serious security pitfalls in applications using Chebyshev permutation polynomials
as production sources for pseudo-random number sequences and permutation vectors~\cite{Liu:Chebyrandom:2011, Chencc:chebyshev:TCASI2001}.

\begin{proposition}\label{prop:partcircle}
Any vertex of the functional graph of a Chebyshev permutation polynomial belongs to one and only one cycle, which composes a Chebyshev integer sequence.
\end{proposition}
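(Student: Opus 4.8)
The plan is to exploit the defining hypothesis that $T_n(x)$ is a permutation polynomial over $\Zp{k}$, i.e.\ the map $\phi\colon x\mapsto T_n(x)\bmod p^k$ is a bijection of the finite set $\Zp{k}$ onto itself (guaranteed by condition~\eqref{PermuteCondition}). In the functional graph $F_{p^k}$ every vertex has out-degree exactly $1$ by construction; bijectivity of $\phi$ forces every vertex also to have in-degree exactly $1$, since injectivity gives at most one preimage and surjectivity gives at least one. A finite directed graph in which every vertex has in-degree and out-degree $1$ is a disjoint union of directed cycles with no transient tails, so this already yields the qualitative picture.

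To make the argument explicit I would first show that every vertex is periodic. Fix $x\in\Zp{k}$. Since $\Zp{k}$ is finite, the forward orbit $x,\phi(x),\phi^2(x),\dots$ cannot consist of pairwise distinct elements, so $\phi^i(x)=\phi^j(x)$ for some $0\le i<j$. Applying the inverse permutation $\phi^{-1}$ exactly $i$ times to both sides cancels the common prefix and yields $x=\phi^{\,j-i}(x)$, i.e.\ $T_n^{\,j-i}(x)\equiv x\pmod{p^k}$. Hence $x$ lies on a cycle whose vertex set is $\{x,T_n(x),\dots,T_n^{N-1}(x)\}$, where $N$ is the least period; for $x\notin\{0,1\}$ this $N$ is the value computed in Theorem~\ref{Theorem:periodValue}, and for $x\in\{0,1\}$ one has $T_n(x)=x$ so $N=1$. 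By definition this orbit is the Chebyshev integer sequence $\{T_n^i(x)\bmod p^k\}_{i\ge0}$, establishing the last clause of the statement.

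For uniqueness, suppose $x$ belonged to two cycles $C_1$ and $C_2$. Traversing forward from $x$, the successor is $\phi(x)$ in both cycles, and inductively the vertex reached after $t$ steps is $\phi^t(x)$; since each $C_i$ is a finite closed walk through $x$ under $\phi$, each equals $\{\phi^t(x):t\ge 0\}$, so $C_1=C_2$. Equivalently, in-degree and out-degree $1$ at every vertex means each connected component of $F_{p^k}$ is a single cycle, so the component of $x$ determines its cycle uniquely. The argument is essentially routine graph-theoretic bookkeeping; the only point that genuinely needs the permutation hypothesis (rather than mere finiteness, which would only produce a $\rho$-shaped component in general) is the cancellation step $\phi^i(x)=\phi^j(x)\Rightarrow x=\phi^{\,j-i}(x)$, so I expect no real obstacle beyond invoking condition~\eqref{PermuteCondition}.
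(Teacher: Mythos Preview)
Your argument is correct and follows essentially the same approach as the paper: both rely on the fact that a bijection of a finite set decomposes into disjoint cycles, which the paper obtains by citing \cite[Theorem~5.1.1]{hall1959marshall} while you prove it directly via the pigeonhole-plus-cancellation step. The identification of each cycle with a Chebyshev integer sequence is handled the same way in both.
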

\begin{proof}
The proof is straightforward according to the definitions of Chebyshev permutation polynomial and Chebyshev integer sequence.
\end{proof}

\begin{proposition}\label{prop:symmetry}
Given $s\in \mathcal{Z}_{p^k}$, if $T_n(s)\equiv s\pmod{p^k}$, then
$T_n(p^k-s)\equiv (p^k-s)\pmod{p^k}$,
where $k\in\mathbb{N}^+$.
\end{proposition}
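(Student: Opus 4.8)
The plan is to exploit the parity symmetry of Chebyshev polynomials: since $n$ is odd (it must be, by the permutation condition $\gcd(n,p^2-1)=1$ forcing $n$ coprime to $2$), the polynomial $T_n(x)$ is an odd function, i.e.\ $T_n(-x)=-T_n(x)$ identically. First I would record this parity fact, either by reading it off the power-series form~\eqref{eq:TnNx} where only odd powers of $x$ appear, or from the trigonometric identity $\cos(n\theta)$ with $\cos\theta\mapsto-\cos\theta$ corresponding to $\theta\mapsto\pi-\theta$. This identity holds over $\mathbb{Z}$, hence modulo $p^k$.

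The key steps are then elementary. Over $\Zp{k}$ we have $p^k-x\equiv -x\pmod{p^k}$, so
\begin{equation*}
T_n(p^k-x)\equiv T_n(-x)\equiv -T_n(x)\pmod{p^k}.
\end{equation*}
Now invoke the hypothesis $T_n(x)\equiv x\pmod{p^k}$ to get $-T_n(x)\equiv -x\equiv p^k-x\pmod{p^k}$. Chaining these congruences yields $T_n(p^k-x)\equiv p^k-x\pmod{p^k}$, which is exactly the claim. I would write this as a short two-line computation.

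There is essentially no obstacle here; the only thing to be careful about is justifying that $n$ is odd so that $T_n$ is an odd function. This follows because the permutation condition~\eqref{PermuteCondition} requires $\gcd(n,p^2-1)=1$, and $p$ odd (indeed $p>3$) makes $p^2-1$ even, so $n$ cannot be even. With that in hand the argument is immediate, and the same computation actually generalizes to show (in the companion Proposition on symmetry of cycle lengths) that $T_n^i(p^k-x)\equiv p^k-T_n^i(x)\pmod{p^k}$ for all $i$, by iterating the odd-function identity; I would mention this only if it streamlines the neighboring proof. The main content is simply the observation that negation commutes with $T_n$ and fixes the set of fixed points.
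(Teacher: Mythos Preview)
Your proof is correct and takes essentially the same approach as the paper: both exploit the fact that $T_n$ is an odd polynomial (only odd powers appear in the expansion~\eqref{eq:TnNx}) to conclude $T_n(p^k-x)\equiv -T_n(x)\equiv -x\equiv p^k-x\pmod{p^k}$. Your explicit justification that $n$ is odd via $\gcd(n,p^2-1)=1$ is a nice touch that the paper leaves implicit.
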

\begin{proof}
From \eqref{cheby:power}, one has $T_n(-s)=-T_n(s)$ since $n$ is odd. 
Thus, $T_n(p^k-s)=-T_n(s-p^k)$, which implies $T_n(p^k-s)\equiv-T_n(s)\equiv (p^k-s)\pmod{p^k}$.
\end{proof}

\begin{proposition}\label{prop:specialPoint2}
The Chebyshev permutation polynomial satisfies
$T_n(s)\equiv s\pmod{p^k}$ 
for $s\in\{0, 1, (p^k+1)/2, (p^k-1)/2, p^k-1\}$,
where $k\in\mathbb{N}^+$.
\end{proposition}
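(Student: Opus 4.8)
The plan is to verify the fixed-point identity $T_n(x) \equiv x \pmod{p^k}$ separately for each of the five listed values, using the trigonometric/algebraic characterizations of $T_n$ already recorded in the excerpt. For $x \in \{0, 1\}$ the claim is immediate: from the recurrence~\eqref{eq:recursive} one has $T_n(1) = 1$ for all $n$, and since $n$ is odd (forced by the permutation condition~\eqref{PermuteCondition}, which requires $\gcd(n, p^2-1)=1$ hence $n$ odd as $2 \mid p^2-1$), the power-series form~\eqref{cheby:power} shows $T_n(0) = 0$. So these two cases need no modular arithmetic at all.

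For $x = p^k - 1$, I would invoke Proposition~\ref{prop:symmetry} together with the case $x=1$: since $T_n(1) \equiv 1 \pmod{p^k}$, Proposition~\ref{prop:symmetry} gives $T_n(p^k-1) \equiv p^k-1 \pmod{p^k}$ directly. For the two genuinely new values $x = (p^k \pm 1)/2$, the natural tool is the representation~\eqref{eq:Tnx:2}: if I can write $x \equiv (y + y^{-1})/2 \pmod{p^k}$ with $y$ a root of unity of small order, then $T_n(x) \equiv (y^n + y^{-n})/2 \equiv x$ whenever $y^n \equiv y^{\pm 1}$. Concretely, $x = (p^k+1)/2 \equiv (1 + 2^{-1}p^k + \cdots)$—better: note $2x \equiv 1 \pmod{p^k}$ for $x=(p^k+1)/2$, so taking $y$ with $y + y^{-1} \equiv 1$, i.e. $y$ a primitive sixth root of unity (satisfying $y^2 - y + 1 = 0$), gives $T_n(x) = (y^n+y^{-n})/2$; since $\gcd(n,6)=1$ (again from~\eqref{PermuteCondition}, as $6 \mid p^2-1$ for $p>3$), $n \equiv \pm 1 \pmod 6$, so $y^n = y^{\pm 1}$ and $T_n(x) \equiv x$. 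Similarly $x = (p^k-1)/2$ has $2x \equiv -1 \pmod{p^k}$, so one takes $y$ with $y+y^{-1} \equiv -1$, a primitive cube root of unity satisfying $y^2+y+1=0$; here $\gcd(n,3)=1$ gives $y^n = y^{\pm 1}$ and again $T_n(x) \equiv x$.

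The one subtlety to handle carefully is that $y$ need not lie in $\Zp{k}$ itself — the sixth and cube roots of unity may only exist in the Galois ring extension $\textit{GR}(p^k,2)$ introduced in the excerpt (they exist in $\Zp{k}$ precisely when $p \equiv 1 \pmod 6$ or $p \equiv 1 \pmod 3$ respectively). But this is exactly the situation already legitimized in the text around~\eqref{eq:yexpress:x}: the identities~\eqref{eq:Tnx:2}–\eqref{eq:Unx:2} hold over the extension ring, and the final congruence $T_n(x) \equiv x$ is a statement purely in $\Zp{k}$, so the auxiliary element $y$ drops out. Alternatively, and perhaps cleaner to write, one can avoid $y$ altogether for these two points by a direct induction on the recurrence~\eqref{eq:recursive}: show $T_j((p^k+1)/2) \pmod{p^k}$ cycles through the period-6 pattern $1, x, x, 1, 1-x, 1-x$ (and analogously a period-3 pattern at $(p^k-1)/2$), then read off the value at $j = n$ using $n \bmod 6$ (resp. $n \bmod 3$).

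The main obstacle — really the only place where care is needed — is pinning down that the permutation hypothesis~\eqref{PermuteCondition} forces $\gcd(n,6)=1$ and $\gcd(n,3)=1$, and being explicit that $y$ lives in $\textit{GR}(p^k,2)$ rather than $\Zp{k}$ when the relevant roots of unity are not already present mod $p^k$. Everything else is a short computation. I would therefore structure the proof as: (i) dispatch $x\in\{0,1\}$ via~\eqref{eq:recursive} and~\eqref{cheby:power}; (ii) dispatch $x=p^k-1$ via Proposition~\ref{prop:symmetry}; (iii) handle $x=(p^k+1)/2$ by setting $2x\equiv 1$, choosing $y\in\textit{GR}(p^k,2)$ with $y+y^{-1}\equiv 1$ (order $6$), noting $y^n\equiv y^{\pm1}$ from $\gcd(n,6)=1$, and applying~\eqref{eq:Tnx:2}; (iv) handle $x=(p^k-1)/2$ identically with $2x\equiv -1$ and $y$ of order $3$.
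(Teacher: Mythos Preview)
Your proposal is correct; the main line through the Galois-ring representation works exactly as you describe, and the derivation of $\gcd(n,6)=1$ from~\eqref{PermuteCondition} is sound since $6\mid p^2-1$ for every prime $p>3$.

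The route, however, differs from the paper's. For $x=(p^k+1)/2$ the paper does \emph{not} pass to $\textit{GR}(p^k,2)$: instead it first establishes $T_n(\tfrac12)=\tfrac12$ over the rationals by running the recurrence~\eqref{eq:recursive} at $\tfrac12$ (obtaining the period-six sequence $\tfrac12,-\tfrac12,-1,-\tfrac12,\tfrac12,1,\ldots$ and reading off the value at $n\bmod 6\in\{1,5\}$), and then transfers this to $\Zp{k}$ by a direct coefficient computation using the power-series form~\eqref{cheby:power} together with the divisibility $2^{i-1}\mid a_i$, arriving at $T_n\bigl(\tfrac{p^k+1}{2}\bigr)\equiv \tfrac{p^k+1}{2}\cdot 2\,T_n(\tfrac12)\pmod{p^k}$. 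For $x=(p^k-1)/2$ the paper then simply applies Proposition~\ref{prop:symmetry} to the case just handled, avoiding a second computation entirely. Your sixth/third-root-of-unity argument is conceptually cleaner---it identifies these fixed points directly with small-order torsion in $\textit{GR}^*(p^k,2)$---while the paper's approach stays inside $\Zp{k}$ and the rationals throughout and never invokes the extension ring.

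One small slip in your alternative recurrence approach: the period-six pattern at $x=(p^k+1)/2$ is $1,\,x,\,-x,\,-1,\,-x,\,x$ (mirroring $1,\tfrac12,-\tfrac12,-1,-\tfrac12,\tfrac12$), not $1,x,x,1,1-x,1-x$. This does not affect your primary argument.
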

\begin{proof}
When $s\in\{0, 1\}$, it follows from \eqref{eq:recursive} that $T_n(s)=s$ for 
any $n\ge 1$.
Furthermore, from \eqref{eq:recursive}, the sequence $\{T_i(\frac{1}{2}) \}_{i=1}^{\infty}$ is a sequence of period six:
$\left\{\frac{1}{2}, \frac{-1}{2}, -1, \frac{-1}{2}, \frac{1}{2}, 1, \frac{1}{2}, \frac{-1}{2}, \cdots\right\}.$
Given that $n\equiv \pm 1\pmod 6$, one deduces that $T_n(\frac{1}{2})=\frac{1}{2}$.
As $n\cdot \frac{(n-m-1)!}{m!(n-2m)!}=\binom{n-m}{m}+\binom{n-m-1}{m-1}$,
this expression is an integer. Hence, \eqref{cheby:power} can be represented as
$T_n(x)=\sum_{i=1}^n a_i x^i$,
where $2^{i-1}\mid a_i$.
Moreover, one has
\(
T_n\left(\frac{p^k+ 1}{2} \right)=\frac{p^k+ 1}{2} \sum_{i=1}^n \frac{a_i}{2^{i-1}}\left(p^k+ 1\right)^{i-1}
\equiv \frac{p^k+ 1}{2} \sum_{i=1}^n \frac{a_i}{2^{i-1}}\pmod{p^k}.
\)
Since $\sum_{i=1}^n \frac{a_i}{2^{i-1}}=2\cdot T_n(\frac{1}{2})=1$,
it follows that
$T_n(\frac{p^k+ 1}{2})\equiv \frac{p^k+ 1}{2}\pmod{p^k}$.
Thus, one obtains
$T_n\left(\frac{p^k-1}{2}\right) \equiv \frac{p^k-1}{2}\pmod{p^k}$ and $T_n(p^k-1) \equiv (p^k-1)\pmod{p^k}$ both from Proposition \ref{prop:symmetry}.
\end{proof}

\begin{proposition}\label{pro:Tnx=x:1}
Given $s\in\mathcal{Z}_p$, if $T_n(s)\equiv s\pmod p$ and $T'_n(s)\equiv 1\pmod p$, then there is an element in $\mathcal{Z}_{p^k}$, $s_k$, satisfying
\begin{equation}\label{condition}
x\equiv s\pmod p \text{~and~} x\equiv T_n(x)\pmod{p^k},
\end{equation}
where $s_k=s$ if $k\leq z_s+1$; $s_k$ is generated by calculating $s_j=s+\sum_{i=z_s+1}^{j-1} y_i p^{i-z_s}$ and $y_i=(\alpha\cdot\frac{T_n(s_{i})-s_{i}}{p^{i}})\bmod p$
for $j=z_s+2, \cdots, k$ in order if $k>z_s+1$, where
$z_s=\max\{e \vvert T'_n(s) \equiv 1\pmod{p^e}\}$ and $\alpha$ is the inverse of 
$\frac{1-T'_n(s)}{p^{z_s}}$ in the field $\mathbb{F}_p$.
\end{proposition}
\begin{proof}
The proposition is proved via induction on $k$.
By the definition of $z_s$, one has
\begin{equation}\label{eq:Temp1}
\begin{cases}
    T'_n(s)\equiv 1\pmod{p^{z_s}},\\
    T'_n(s)\not\equiv 1\pmod{p^{z_s+1}},
\end{cases}
\end{equation}
namely
\begin{equation}\label{eq:T'nx:d}
T'_n(s)=1+c\cdot p^{z_s}, 
\end{equation}
where $c\nmid p$.
Combining $T_n(s)\equiv s\pmod p$, condition~\eqref{eq:Temp1}, and Proposition~\ref{lemma:Tn(x):w} with $(w, h)=(z_s, 0)$, one deduces that 
$T_n(s)\equiv s\pmod{p^k}$ for $k\leq z_s+1$.
Thus, the proposition holds when $k\leq {z_s}+1$.
Assume that the proposition holds when $k=e>{z_s}+1$, i.e., $T_n(s_e)\equiv s_e\pmod{p^e}$, where
$s_e=s+\sum_{i={z_s}+1}^{e-1} y_i p^{i-z_s}$.
Define
\begin{equation}\label{eq:TnXs:2}
T_n(s_e)=s_e+h_ep^e.
\end{equation}
By condition~\eqref{eq:Temp1} and Lemma~\ref{lemma:Tn:0pw} with 
$(w, h)=(z_s, \sum_{i=z_s+1}^{e-1}y_ip^{i-z_s-1})$, one has
\begin{equation}\label{eq:TnXs:pw1}
T'_n(s_e) \equiv T'_n(s)\pmod{p^{z_s+1}}
\end{equation}
and
\begin{equation}\label{eq:TnXs:pw2}
\frac{T^{(m)}_n(s_e)\cdot p^m}{m!} \equiv 0\pmod{p^{z_s+2}}
\end{equation}
for $m\ge2$.
Applying Taylor's formula to $T_n(x)$ at $s_e$ and
referring to \eqref{eq:TnXs:2}, one obtains
\begin{align*}
T_n(s_e+&yp^{e-z_s})\\
        &=s_e+h_e p^e+\sum^n_{m=1}\frac{(y p^{e-z_s-1}\cdot p)^m}{m!}T^{(m)}_n(s_e),
\end{align*}
where $y\in \mathbb{Z}$.
Since $v_p\left(\frac{(y p^{e-z_s-1}\cdot p)^m}{m!}T^{(m)}_n(s_e)\right)\geq (e-z_s-1)m+z_s+2\ge (e-z_s-1)2+z_s+2=2e-z_s>e+1$ for $m\ge 2$, substituting congruence~\eqref{eq:TnXs:pw2} into the preceding equation, one obtains
$T_n(s_e+yp^{e-z_s})\equiv s_e+h_ep^e+yp^{e-z_s}T'_n(s_e)\pmod{p^{e+1}}$.
Incorporating congruence~\eqref{eq:TnXs:pw1} and
\eqref{eq:T'nx:d} into this congruence, one obtains
\begin{equation*}
T_n(s_e+y p^{e-z_s})\equiv s_e+yp^{e-z_s}+(h_e+cy)p^e\pmod{p^{e+1}}.
\end{equation*} 
Substituting 
$y_e=
(\alpha\cdot \frac{T_n(s_e)-s_e}{p^e})\bmod p$ for $y$, one derives
$h_e+cy\equiv 0\pmod{p}$ from \eqref{eq:T'nx:d} and \eqref{eq:TnXs:2}.
Therefore, $s_{e+1}=s_e+y_ep^{e-z_s}$ is an element of $\mathcal{Z}_{p^{e+1}}$ that satisfies $T_n(x)\equiv x\pmod{p^k}$ for $k=e+1$.
Thus, the proposition holds when $k=e+1>z_s+2$,
completing the induction and the proof of the proposition.
\end{proof}

\begin{proposition}\label{pro:Tnx=x:2}
Given $s\in\mathcal{Z}_p$, if $T_n(s)\equiv s\pmod p$ and $T'_n(s)\not\equiv 1\pmod p$, then $s_k$ is the sole element in $\mathcal{Z}_{p^k}$ satisfying 
\begin{equation*}
x\equiv s\pmod p \text{~and~} x\equiv T_n(x)\pmod{p^k},
\end{equation*}
where $s_k$ is generated by calculating $s_j=s+\sum_{i=1}^{j-1} y_{i} p^i$ and $y_i=(\beta\cdot\frac{T_n(s_i)-s_i}{p^i})\bmod p$ for $j=1, 2, \cdots, k$ in order, $k\ge 1$, 
and $\beta$ is the inverse of $(1-T'_n(s))$ in the field $\mathbb{F}_p$.
\end{proposition}
\begin{proof}
The proposition is proved by mathematical induction on $k$.
When $k=1$, the proposition holds by $s_1=s$.
Assume that the proposition holds for $k=e>1$, namely $T_n(s_e)\equiv s_e\pmod{p^e}$, 
where $s_e=s+\sum_{i=1}^{e-1}y_ip^i$ is the unique element in $\mathcal{Z}_{p^e}$ satisfying condition~(\ref{condition}).
Let
$T_n(s_e)=s_e+h_e p^e$, and $T'_n(s)=d+c\cdot p$,
where $d\not\equiv 1\pmod p$.
Using Taylor's formula, one obtains
\(
T_n(s_e+y p^e)
\equiv T_n(s_e)+y p^e\cdot T'_n(s_e)\pmod{p^{e+1}},
\)
where $y\in \mathbb{Z}$.
Since $T'_n(s_e)=T'_n(s+\sum_{i=1}^{e-1}y_ip^i)\equiv T'_n(s)\pmod{p}$, one can further derive that
\(
T_n(s_e+y p^e)
\equiv T_n(s_e)+y p^e\cdot T'_n(s)\pmod{p^{e+1}}.
\)
Incorporating the expressions of $T_n(s_e)$ and $T'_n(s)$ into
this congruence, one has
\begin{equation*}
\begin{aligned}
T_n(s_e+y p^e)
    &\equiv s_e+h_e p^e+y p^e\cdot(d+c\cdot p)\pmod{p^{e+1}}\\
    &\equiv s_e+y p^e+(h_e-(1-d)y)p^e\pmod{p^{e+1}}.
\end{aligned}
\end{equation*}
Substituting $y_e=((1-d)^{-1}\cdot h_e)\bmod p$ into $y$ in the preceding congruence, one can get
\begin{equation}\label{eq:hed}
(h_e-(1-d)y)\equiv 0\pmod{p},
\end{equation}
where $(1-d)^{-1}=(1-T'_n(s)+c\cdot p)^{-1}\equiv \beta\pmod{p}$.
Since $y_e$ is the unique value of $y$ 
in $\mathcal{Z}_p$ that satisfies congruence~\eqref{eq:hed},
it follows that 
$s_{e+1}=s_e+y_e\cdot p^e$ is the unique element in $\mathcal{Z}_{p^{e+1}}$
satisfying condition~(\ref{condition}) for $k=e+1$.
Therefore, the proposition holds for $k=e+1$, 
completing the induction and proving the proposition.
\end{proof}

\subsection{Evolution Rules of $G_{p^k}$ Concerning $k$}

This subsection unveils the elegant patterns in the cycle distribution of $G_{p^k}$ with increasing $k$, as observed from numerous experiments on Chebyshev permutation polynomials with random parameters. As presented in Table~\ref{table:cycle:n13p7}, the number of cycles with period $T_c$ in $G_{p^k}$ equals that with period $p \cdot T_c$ in $G_{p^{k+1}}$ when $T_c\ge p^2$, which is summarized in Theorem~\ref{pro:Threshold:Tc}.

\setlength\tabcolsep{4.8pt} 
\addtolength{\abovecaptionskip}{-1pt}
\begin{table}[!htb]
    \centering
    \caption{The number of cycles of period $T_c$ in $G_{p^k}$ when $(n, p)=(13, 7)$.}
    \begin{tabular}{*{10}{c|}c}
    \hline
    \diagbox{$k$}{$C_{T_c, k}$}{$T_c$}
        &1	&2	&7	&14	&49	&98	&343&686&2401&4802 \\ \hline
    1	&5	&1	&	&	&	&	&	&	&	&\\ \hline
    2	&17	&16	&	&	&	&	&	&	&	&\\ \hline
    3	&17	&16	&12	&15	&	&	&	&	&	&\\ \hline
    4	&17	&16	&12	&15	&12	&15	&	&	&	&\\ \hline
    5	&17	&16	&12	&15	&12	&15	&12	&15	&	&\\ \hline
    6	&17	&16	&12	&15	&12	&15	&12	&15	&12	&15\\ \hline
    \end{tabular}
\label{table:cycle:n13p7}
\end{table}

\begin{figure*}[!htb]
    \centering
    \begin{minipage}{1.8\twofigwidth}
      \centering
      \includegraphics[width=1.8\twofigwidth]{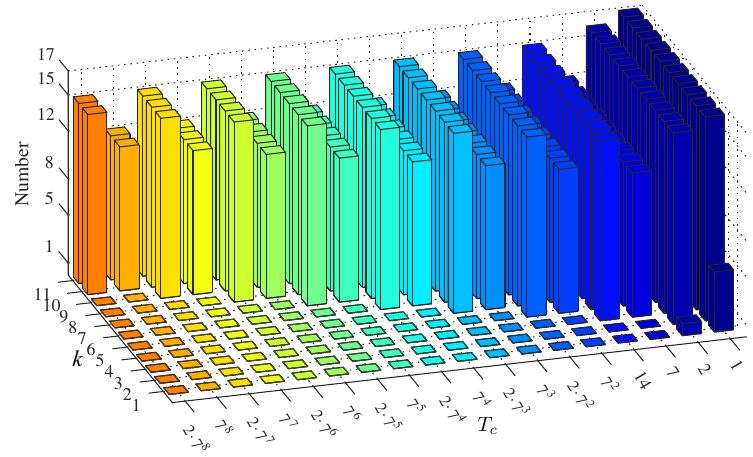}
      a)
    \end{minipage}
    \begin{minipage}{1.8\twofigwidth}
      \centering
      \includegraphics[width=1.8\twofigwidth]{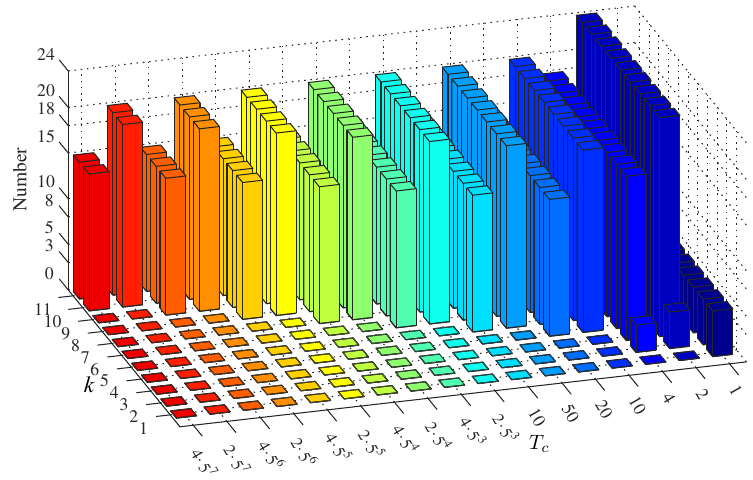}
      b)
    \end{minipage}\\
    \begin{minipage}{1.8\twofigwidth}
      \centering
      \includegraphics[width=1.8\twofigwidth]{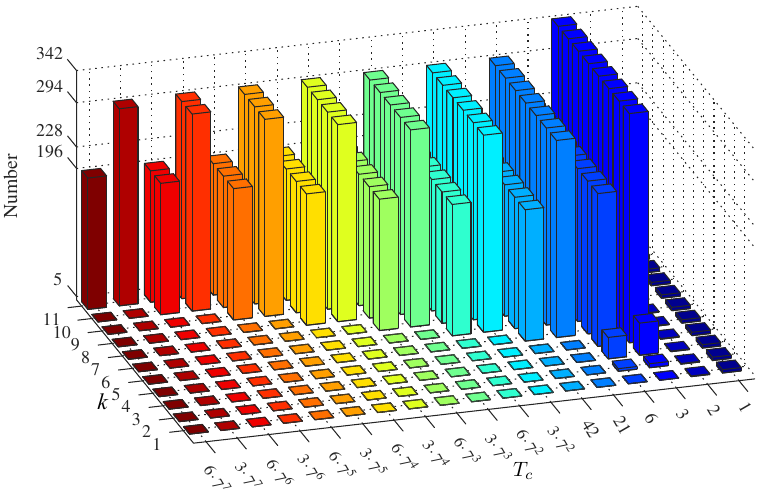}
      c)
    \end{minipage}
    \begin{minipage}{1.8\twofigwidth}
      \centering
      \includegraphics[width=1.8\twofigwidth]{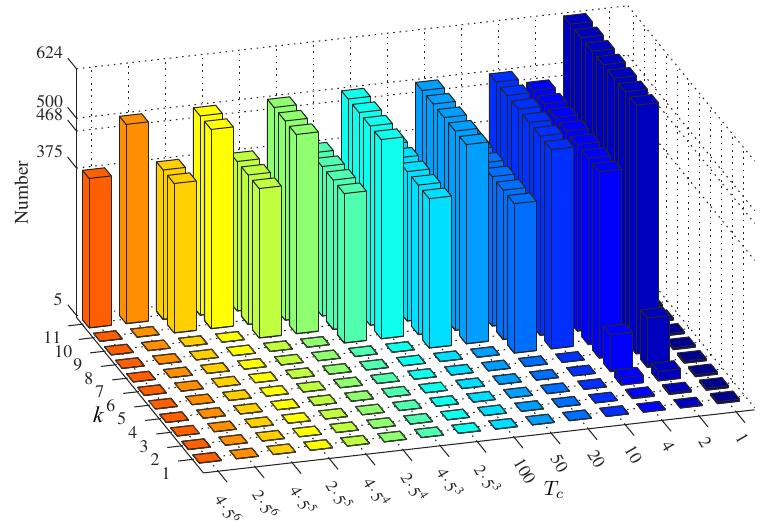}
      d)
    \end{minipage}
\caption{The cycle distribution of Chebyshev polynomial $T_n(x)$ over $\Zp{k}$, where $k=1, 2, \cdots, 11$: a) $(n, p)=(13, 7)$; b) $(n, p)=(43, 5)$; c) $(n, p)=(19, 7)$; d) $(n, p)=(443, 5)$.}
\label{fig:cycle:disall}
\end{figure*}

\begin{theorem}\label{pro:Threshold:Tc}
Let $C_{T_c, k}$ denote the number of cycles of length $T_c$
in the functional graph of Chebyshev permutation polynomials over $\Zp{k}$. Then, one has
\begin{equation*}
C_{p\cdot T_c, k+1}=C_{T_c, k}
\end{equation*}
when $T_c\geq p^2$.
\end{theorem}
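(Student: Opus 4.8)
The plan is to set up a bijection between the set of cycles of period $T_c$ in $F_{p^k}$ and the set of cycles of period $p\cdot T_c$ in $F_{p^{k+1}}$, using the period formula of Theorem~\ref{Theorem:periodValue} together with Corollary~\ref{pro:Nc:pk}. First I would observe that a cycle of period $T_c$ in $F_{p^k}$ is exactly the orbit of some $x\in\Zp{k}\backslash\{0,1\}$ under $T_n$, and by Proposition~\ref{prop:partcircle} every vertex lies on exactly one cycle, so counting cycles of a given period is the same as counting orbits. Since $T_c\ge p^2$, Corollary~\ref{pro:Nc:pk} applies: for any $x$ whose orbit modulo $p^w$ has period $N_c\ge p^2$, the orbit modulo $p^k$ has period $N_c\cdot p^{k-w}$ for all $k\ge w$; in particular the period multiplies exactly by $p$ each time $k$ increases by one, once the period has reached $p^2$. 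This is the key structural fact that makes the two sides comparable.

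Next I would make the lifting map explicit. Given a cycle $C$ of period $T_c$ in $F_{p^k}$, pick a representative $x\in C$ and consider all the preimages of $x$ under the reduction map $\Zp{k+1}\to\Zp{k}$, i.e.\ the $p$ elements $x+j p^k$ for $j=0,1,\dots,p-1$. Each such element lies on some cycle of $F_{p^{k+1}}$, and by the period formula its period is either $T_c$ (if the orbit modulo $p^{k+1}$ still closes up at length $T_c$) or $p\cdot T_c$. The threshold hypothesis $T_c\ge p^2$, via Corollary~\ref{pro:Nc:pk}, forces the period of every lift to be exactly $p\cdot T_c$: the ``change point'' $v$ in Theorem~\ref{Theorem:periodValue} satisfies $v<w\le k$ once $T_c\ge p^2$, so we are already in the regime where the period grows by a factor $p$ at each step. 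Hence all $p$ lifts $x+jp^k$ have period $p\cdot T_c$ in $F_{p^{k+1}}$. I would then argue that these $p$ lifts in fact all lie on a \emph{single} cycle of $F_{p^{k+1}}$: since that cycle has length $p\cdot T_c$ and projects onto the length-$T_c$ cycle $C$ under reduction mod $p^k$, the fibre of the projection over each point of $C$ has size exactly $p$, and $p\cdot T_c/T_c=p$, so the cycle meets the fibre over $x$ in all $p$ of its points. Conversely, reduction mod $p^k$ sends a cycle of period $p\cdot T_c$ in $F_{p^{k+1}}$ onto a cycle of $F_{p^k}$ whose period divides $p\cdot T_c$ and is a multiple of $T_c'$ for the projected point; using Theorem~\ref{Theorem:periodValue} again (the period mod $p^k$ is $N_x l_x p^{\,k-v}$ and mod $p^{k+1}$ is $N_x l_x p^{\,k+1-v}$ with the same $N_x,l_x,v$) this projected period is forced to be exactly $T_c$. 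This gives a well-defined inverse map, hence a bijection, and therefore $N_{p\cdot T_c,\,k+1}=N_{T_c,\,k}$.

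The step I expect to be the main obstacle is showing that the $p$ lifts $\{x+jp^k\}_{j=0}^{p-1}$ land on \emph{one} cycle rather than being distributed among several cycles of period $p\cdot T_c$; equivalently, that the reduction map is a genuine $p$-to-$1$ covering on the relevant cycles with no ``branching.'' This needs the quantitative content of Lemma~\ref{lemma:TniN}: the expression $T^{i\cdot N}_n(x)\equiv x+b\,p^w\sum_{j=0}^{i-1}(T'_{n^N}(x))^j\pmod{p^{w+1}}$ shows how iterating $N=N_x l_x p^{\,k-v}$ times moves a point within its fibre, and because $T'_{n^{N_x l_x p^{k-v}}}(x)\equiv 1\pmod p$ the partial sums $\sum_{j=0}^{i-1}(\cdot)^j$ run through all residues mod $p$ as $i=1,\dots,p$ (this is exactly Lemma~\ref{lemma:xmodp2}), so the orbit of $x$ in $F_{p^{k+1}}$ visits every one of the $p$ lifts before returning. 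That is precisely what pins the cycle length at $p\cdot T_c$ and makes the fibre-wise count come out right. Once that is in place, the bijection and the counting identity follow, and setting $t=k-v$ as in the proof of Theorem~\ref{Theorem:periodValue} ties everything back to the earlier results.
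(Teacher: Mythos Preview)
Your proposal is correct and follows essentially the same route as the paper: both arguments use Theorem~\ref{Theorem:periodValue} (and its corollary) to show that once $T_c\ge p^2$ the period multiplies by exactly $p$ when $k$ increases to $k+1$, and then turn this into a bijection between $T_c$-cycles in $F_{p^k}$ and $pT_c$-cycles in $F_{p^{k+1}}$. The paper phrases the bijection via the explicit description of each cycle as a union of arithmetic progressions $\{X_i+j\,p^{v}\}$ (citing \cite[Theorem~3]{Yoshioka:ChebyshevPk:TCAS2:2018}), whereas you phrase it via the reduction map $\Zp{k+1}\to\Zp{k}$ and a fibre-size count; these are two ways of packaging the same content. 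Your treatment is in fact a bit more explicit than the paper's about the inverse direction (that every $pT_c$-cycle upstairs projects to a $T_c$-cycle downstairs), which the paper leaves implicit. One small remark: the detour through Lemma~\ref{lemma:TniN} and Lemma~\ref{lemma:xmodp2} in your last paragraph is not actually needed once you have the fibre-counting argument you already gave---since the lifted cycle has length $pT_c$, projects onto a cycle of length $T_c$, and the projection is $T_n$-equivariant, equal fibre sizes force the fibre over $x$ to be all $p$ lifts.
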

\begin{proof}
From Proposition~\ref{prop:partcircle}, one can deduce that a directed cycle of length $T_c$ in the functional graph $G_{p^k}$ corresponds to a Chebyshev integer sequence $\{T_n^i(s)\bmod p^k\}_{i\ge 0}$ with least period $T_c$, where $s\in\mathcal{Z}_{p^k}$. According to the definition of $l_s$ in Theorem~\ref{Theorem:periodValue}, one finds that $l_s<p$. Referring also to the definition of $N_s$ and applying Fermat's Little Theorem, one obtains $N_s\leq p$. When $T_c\geq p^2$, assume that $k\leq v=\max\{e \vvert T_n^{N_s\cdot l_s}(s)\equiv s\pmod{p^e}\}$. Consequently, one has $T_n^{N_s\cdot l_s}(s)\equiv s\pmod{p^k}$, which implies that $T_c\leq N_s\cdot l_s<p^2$. However, this conclusion contradicts the assumption. Thus, one must have $k>v$.

By Theorem~\ref{Theorem:periodValue}, any cycle of length $T_c=N_s\cdot l_s\cdot p^{k-v}$
is ``expanded" to a cycle of length $N_s\cdot l_s\cdot p^{k+1-v}=T_c\cdot p$ over $\Zp{k+1}$.
Now, consider two cycles of length $T_c<p^2$ from $G_{p^k}$, denoted by $C_1=\{T^i_n(s)\bmod p^k\}^{T_c}_{i=0}$, and $C_2=\{T^i_n(s')\bmod p^k\}^{T'_c}_{i=0}$.
If $C_1$ and $C_2$ are expanded in the aforementioned manner
to the same cycle in $G_{p^{k+1}}$, there must exist an integer $j$ such that $T^j_n(s)\equiv s'\pmod {p^{k+1}}$.
Hence, a cycle of length $T_c \cdot p$ over $\Zp{k+1}$ corresponds to precisely one cycle of length $T_c$ over $\Zp{k}$. Therefore, one concludes that $C_{p\cdot T_c, k+1}=C_{T_c, k}$ when $T_c\geq p^2$.
\end{proof}

Lemma~\ref{lemma:thershold:equal} describes
that the derivative of $T_{n^{N_s l_s}}$
is congruent to $\pm n^{N_s l_s}$ modulo $p^w$ under specific conditions.
Lemma~\ref{lemma:nk1k2} offers a sufficient condition for deducing $n^{k_1}\equiv 1\pmod{p^w}$ from $n^{k_1 k_2}\equiv 1\pmod{p^w}$.
Lemma~\ref{lemma:ls=lsi} reveals that the value of $l_x$ remain constant for any $x \in \{T^i_n(s)\bmod p\}^{N_s-1}_{i=0}$.
By applying Lemmas~\ref{lemma:thershold:equal} and \ref{lemma:nk1k2},
one can infer that the maximum values of $e$, satisfying $T'_{n^{N_{s}l_{s}}}(s) \equiv 1\pmod{p^e}$, are identical for any
$s\in\mathcal{Z}_p$, and depend solely on the pair $(n, p)$ as shown in Lemma~\ref{pro:thershold:equal}.
From Proposition~\ref{prop:partcircle}, one can deduce that the functional graph $G_p$ consists of
several cycles with varying lengths, each corresponding to a sequence $\{T^i_n(s)\bmod p\}^{N_s-1}_{i=0}$ for some $s\in\Zp{}$.
Since each cycle in $\Zp{}$ evolves independently as $k$ increases, one can 
focus on the part of $G_{p^k}$ composed of vertices with values in the set $\{x \vvert x\in \mathcal{Z}_{p^k}, (x\bmod p)\in\{T_n^i(s)\bmod p\}_{i=0}^{N_s-1}\}$.
Proposition~\ref{pro:SMN:+1} presents the exact number of cycles of various lengths.

\begin{lemma}\label{lemma:thershold:equal}
Given $s\in\mathcal{Z}_p\backslash \{1, p-1\}$, if $T_{n^{N_sl_s}}(s)\equiv s\pmod{p^w}$, then $T'_{n^{N_sl_s}}(s)\equiv (h\cdot n^{N_s})^{l_s}\pmod{p^w}$,
where $h\in\{-1, 1\}$ satisfying $T'_{n^{N_s}}(s)\equiv h\cdot n^{N_s}\pmod{p}$ 
and $w\in\mathbb{N}^+$.
\end{lemma}
\begin{proof}
Referring to \cite[Eq. (10)]{Rayes:Factorization:CM05}, one has 
\begin{equation}\label{eq:U_nm_s}
U_{m_1m_2-1}(s)=U_{m_1-1}(T_{m_2}(s))\cdot U_{m_2-1}(s),
\end{equation}
where $m_1, m_2\in\mathbb{N}^+$.
Setting $(m_1, m_2)=(n^{N_s(l_s-1)}, n^{N_s})$ in this equation, one obtains
\[
U_{n^{N_sl_s}-1}(s)=U_{n^{N_s(l_s-1)}-1}(T_{n^{N_s}}(s))\cdot U_{n^{N_s}-1}(s).
\]
By the definition of $N_s$, one has $T_{n^{N_s}}(s)\equiv s\pmod p$,
yielding
$U_{n^{N_sl_s}-1}(s) \equiv U_{n^{N_s(l_s-1)}-1}(s)\cdot U_{n^{N_s}-1}(s)\pmod{p}$.
Setting $(m_1, m_2)=(n^{N_s(l_s-2)}, n^{N_s})$ in \eqref{eq:U_nm_s} again, one has
\[
U_{n^{N_s(l_s-1)}-1}(s)=U_{n^{N_s(l_s-2)}-1}(T_{n^{N_s}}(s))\cdot U_{n^{N_s}-1}(s).
\]
Thus, one deduce that
\(
U_{n^{N_sl_s}-1}(s) \equiv U_{n^{N_s(l_s-2)}-1}(s)\cdot (U_{n^{N_s}-1}(s))^2\pmod{p}.
\)
Repeating this process $l_s-3$ more times, one finds
\begin{equation}\label{eq:U=Ulx}
U_{n^{N_sl_s}-1}(s)\equiv (U_{n^{N_s}-1}(s))^{l_s}\pmod{p}.
\end{equation}

Replacing $n$ in Lemma~\ref{lemma:T'n(x):n} with $w=1$ by $n^{N_s}$
and noting that $T_{n^{N_s}}(s)\equiv s\pmod p$, it follows that
$T'_{n^{N_s}}(s)\equiv \pm n^{N_s}\pmod{p}$.
Replacing $n$ in
\eqref{eq:relation} by $n^{N_s}$, one obtains
$T'_{n^{N_s}}(s)=n^{N_s}U_{n^{N_s}-1}(s)$.
If $T'_{n^{N_s}}(s)\equiv -n^{N_s}\pmod{p}$, it follows that
$U_{n^{N_s}-1}(s)\equiv -1\pmod{p}$ since $\gcd(n, p)=1$.
Applying \eqref{eq:U=Ulx}, one then has
\begin{equation}\label{eq:unNsls:pm1}
U_{n^{N_sl_s}-1}(s) \equiv (-1)^{l_s}\pmod p.
\end{equation}
Using Lemma~\ref{lemma:T'n(x):n} again by replacing $n$ with $n^{N_sl_s}$ and 
noting
$T_{n^{N_sl_s}}(s)\equiv s\pmod{p^w}$, one also has
$U_{n^{N_sl_s}-1}(s)\equiv \pm 1\pmod{p^w}$. 
From \eqref{eq:unNsls:pm1}, it follows that 
\(
 U_{n^{N_sl_s}-1}(s)\equiv (-1)^{l_s}\pmod{p^w}.
\)
Thus, from \eqref{eq:relation}, one concludes that
\(
T'_{n^{N_sl_s}}(s)\equiv(-n^{N_s})^{l_s}\pmod{p^w},
\)
which proves the case $h=-1$ of this lemma.
The proof for the case $h=1$ is analogous and is omitted.
\end{proof}

\begin{lemma}\label{lemma:nk1k2}
Let $x\in \mathbb{Z}$, $k_1, k_2, w\in \mathbb{N}^+$, and $p$ be a prime such that $p\nmid k_2$.
If $x^{k_1 k_2} \equiv 1 \pmod{p^w}$ and $x^{k_1}\equiv 1 \pmod{p}$, then $x^{k_1} \equiv 1 \pmod{p^w}$.
\end{lemma}
\begin{proof}
From $x^{k_1}\equiv 1\pmod p$, one has 
$x^{k_1}=1+c\cdot p$.
Thus,
\(
x^{k_1k_2}=(1+c\cdot p)^{k_2}
=1+\sum_{i=1}^{k_2}\binom{k_2}{i}(c\cdot p)^i 
=1+c\cdot p\left(k_2+\sum_{i=2}^{k_2}\binom{k_2}{i}(c\cdot p)^{i-1} \right).
\)
It further deduces from $x^{k_1k_2}\equiv 1\pmod{p^w}$ that $p^w\mid ((c\cdot p)\cdot(k_2+\sum_{i=2}^{k_2}\binom{k_2}{i}(c\cdot p)^{i-1}))$.
Since $p\nmid k_2$ and $p\mid \sum_{i=2}^{k_2}\binom{k_2}{i}(c\cdot p)^{i-1}$, one has $p^w\nmid (k_2+\sum_{i=2}^{k_2}\binom{k_2}{i}\cdot (c\cdot p)^{i-1})$. Therefore, $p^w\mid (c\cdot p)$, implying $x^{k_1}\equiv 1\pmod{p^w}$ from the above expression of $x^{k_1}$.
\end{proof}

\begin{lemma}\label{lemma:ls=lsi}
Given $s\in\mathcal{Z}_p$, one has $l_{c_i}=l_s$,
where $c_i=T_n^i(s)\bmod{p}$ and $i\in\{0,1,\cdots, N_s-1\}$.
\end{lemma}
\begin{proof}
If $N_s=1$, one has $c_i=s$ for $i=0$, which means the lemma holds in this case. 

Now, suppose $N_s \neq 1$. From Proposition~\ref{prop:specialPoint2} with $k=1$, one deduces that $s \notin \{1, p-1\}$. Furthermore, from $T_{n^{N_s}}(s) \equiv s \pmod{p}$ and \eqref{eq:U}, one concludes that $U_{n^{N_s}-1}(s) \equiv \pm 1 \pmod{p}$. 

Consider first the case where $U_{n^{N_s}-1}(s) \equiv 1 \pmod{p}$. By setting $(m_1, m_2) = (n^i, n^{N_s})$ in \eqref{eq:U_nm_s}, one obtains the relation
$U_{n^{i+N_s}-1}(s)=U_{n^i-1}(T_{n^{N_s}}(s))\cdot U_{n^{N_s}-1}(s)$,
which implies that
$U_{n^{i+N_s}-1}(s)\equiv U_{n^i-1}(s)\pmod{p}$.
Similarly, by setting $(m_1, m_2) = (n^{N_s}, n^i)$ in \eqref{eq:U_nm_s}, one finds
$U_{n^{i+N_s}-1}(s)=U_{n^{N_s}-1}(T_{n^i}(s))\cdot U_{n^i-1}(s)$.
Thus, it follows that
\[U_{n^i-1}(s) \equiv U_{n^{N_s}-1}(T_{n^i}(s)) \cdot U_{n^i-1}(s) \pmod{p}.
\]
Since $s \notin \{1, p-1\}$, one concludes that $c_i \notin \{1, p-1\}$ by applying Proposition~\ref{prop:specialPoint2} with $(n^i, 1)$ in place of $(n, k)$. This implies that $p \nmid (s^2 - 1)$ and $p \nmid T^2_{n^i}(s) - 1$. Replacing $n$ with $n^i$ in \eqref{eq:Tn2:unx}, one finds that $U_{n^i-1}(s) \not\equiv 0 \pmod{p}$. Consequently, $U_{n^{N_s}-1}(T_{n^i}(s)) \equiv 1 \pmod{p}$.
From \eqref{eq:relation} and the definition of $N_x$, one obtains
$T'_{n^{N_{c_i}}}(c_i) = n^{N_s} \cdot U_{n^{N_s}-1}(c_i)$.
Therefore, one has
$T'_{n^{N_{c_i}}}(c_i) \equiv n^{N_s} \cdot U_{n^{N_s}-1}(T_{n^i}(s)) \equiv n^{N_s} \pmod{p}$.
Recalling the definition of $l_x$, one concludes that $l_{c_i} = l_s$ for all $i \in \{0, 1, \cdots, N_s-1\}$.
The proof for the case where $U_{n^{N_s}-1}(s) \equiv -1 \pmod{p}$ proceeds similarly and is omitted.
\end{proof}

\begin{lemma}\label{pro:thershold:equal}
For any $s\in\mathcal{Z}_p$, one has
$\max\{e \vvert T'_{n^{N_{s}l_{s}}}(s)\equiv 1\pmod{p^e}\}=\max\{e \vvert n^{2\cdot\ord(n^2)} \equiv 1\pmod{p^e}\}$.
\end{lemma}
\begin{proof}
Let $w_s=\max\{e \vvert T'_{n^{N_s l_s}}(s) \equiv 1\pmod{p^e}\}$, so one has
$T'_{n^{N_sl_s}}(s) \equiv 1\pmod{p^{w_s}}$.
From the definition of $N_s$, it follows that
$T_{n^{N_sl_s}}(s) \equiv s\pmod{p}$. 
By substituting $(n, h)=(n^{N_s l_s}, 0)$ in Proposition~\ref{lemma:Tn(x):w}, one obtains
\begin{equation}\label{eq:TNs:pws+1}
T_{n^{N_sl_s}}(s)\equiv s\pmod{p^{w_s+1}}.
\end{equation}

Now, suppose $s \in \{1, p-1\}$. One then observes that $N_s = 1$ from Proposition~\ref{prop:specialPoint2}.
From the definition of $l_x$ and \eqref{eq:pm12},
one sees that both
$l_1$ and $l_{p-1}$ are equal to $\ord(n^2)$.
Replacing $n$ in Lemma~\ref{lemma:tn(-1+p):equal} by $n^{l_1}$, one notes that
$T'_{n^{l_1}}(p-1)\equiv 1\pmod{p^w}$ is equivalent to 
$T'_{n^{l_1}}(-1)\equiv 1\pmod{p^w}$ for any positive integer $w$.
Thus, 
$T'_{n^{l_{p-1}}}(p-1)\equiv 1\pmod{p^w}\Leftrightarrow T'_{n^{l_1}}(1)\equiv 1\pmod{p^w}$ since $l_1=l_{p-1}$.
Setting $(n, m)=(n^{l_1}, 1)$ in \eqref{eq:t'n1} and \eqref{eq:pm12}, one deduces that
$T'_{n^{l_1}}(\pm 1)=n^{2l_1}$.
Hence, the two preceding equivalent congruences are reduced to 
\(n^{2\cdot \ord(n^2)}\equiv 1\pmod{p^w}\).
Thus, the lemma holds for $s\in\{1, p-1\}$.

Now suppose $s\notin\{1, p-1\}$.
From \eqref{eq:TNs:pws+1} and Lemma~\ref{lemma:thershold:equal} with $w=w_s+1$,
one has $T'_{n^{N_sl_s}}(s)\equiv (h\cdot n^{N_s})^{l_s}\pmod{p^{w_s+1}}$,
where $h\in\{-1, 1\}$ satisfies $T'_{n^{N_s}}(s)\equiv h\cdot n^{N_s}\pmod{p}$.
Assuming \( h = -1 \), it follows that
\begin{equation}\label{eq:nNsls:pws+1}
T'_{n^{N_sl_s}}(s)\equiv (-n^{N_s})^{l_s}\pmod{p^{w_s+1}},
\end{equation}
and $l_s=\ord(T'_{n^{N_s}}(s))=\ord(-n^{N_s})$.
Let $l=\ord(n^{N_s})$. Then, $(n^{N_s})^l\equiv 1\pmod p$ and $(-n^{N_s})^{2l} \equiv 1\pmod p$. Hence, $l_s\mid 2l$.
According to \cite[Theorem 1.15]{Rudolf:Introfinite:1994}, one obtains 
$l=\frac{\ord(n)}{\gcd(N_s, \ord(n))}$
and $\ord(n^2)=\frac{\ord(n)}{\gcd(2, \ord(n))}$.
Thus, 
\begin{equation}\label{eq:l3}
l=\frac{r_1\cdot \ord(n^2)}{r_2},
\end{equation}
where $r_1=\gcd(2, \ord(n))$ and $r_2=\gcd(N_s, \ord(n))$.
Comparing $T'_{n^{N_sl_s}}(s) \equiv 1\pmod{p^{w_s}}$ and \eqref{eq:nNsls:pws+1}, one obtains 
\begin{equation}\label{eq:nspw}
(-n^{N_s})^{l_s} \equiv 1\pmod{p^{w_s}}.
\end{equation}
Since \( l_s \mid 2l \), it follows that
$(n^{N_s})^{2l} \equiv 1\pmod{p^{w_s}}$,
which implies
$n^{2\cdot \ord(n^2)\cdot \frac{r_1\cdot N_s}{r_2}} \equiv 1\pmod{p^{w_s}}$.
Given that $n^{2 \cdot \ord(n^2)}\equiv 1\pmod p$, $p\nmid \frac{r_1\cdot N_s}{r_2}$ and Lemma~\ref{lemma:nk1k2}, 
one concludes that $^{2\cdot \ord(n^2)}\equiv 1\pmod{p^{w_s}}$.
Hence, $\max\{e \vvert n^{2\cdot\ord(n^2)} \equiv 1\pmod{p^e} \}\ge w_s$.

Assuming the equality in this relational expression does not hold, one has $n^{2\cdot \ord(n^2)}\equiv 1\pmod{p^{w_s+1}}$.
Since $2\cdot \ord(n^2)$ always divides $2r_2\cdot l$ from \eqref{eq:l3} and $r_2\mid N_s$, one has $2\cdot \ord(n^2)\mid 2N_sl$, implying
$n^{2N_s l}=(-n^{N_s})^{2l} \equiv 1\pmod{p^{w_s+1}}$.
Since $l_s \mid 2l$ and $p\nmid l$, one concludes that
$2l=l_s\cdot h$ and $p\nmid h$.
Setting $(x, k_1, k_2, w)=(-n^{N_s}, l_s, h, w_s)$ in
Lemma~\ref{lemma:nk1k2}, one obtains $(-n^{N_s})^{l_s} \equiv1\pmod{p^{w_s+1}}$ from
$(-n^{N_s})^{l_s h}\equiv 1\pmod{p^{w_s+1}}$ and \eqref{eq:nspw}.
Thus, it follows from \eqref{eq:nNsls:pws+1} that $T'_{n^{N_sl_s}}(s) \equiv1\pmod{p^{w_s+1}}$,
which contradicts the definition of \( w_s \).
Hence, $\max\{e \vvert n^{2\cdot\ord(n^2)} \equiv 1\pmod{p^e} \}=w_s$.
The proof for the case $h=1$ is similar and is omitted.
\end{proof}

\begin{proposition}
\label{pro:SMN:+1}
Given $s\in\mathcal{Z}_p$, the elements of the set $\{x \vvert 
x\in \mathcal{Z}_{p^k}, (x\bmod p)\in\{T_n^i(s)\bmod p\}_{i=0}^{N_s-1}\}$ form cycles in the functional graph $G_{p^k}$ with three distinct presentations:
one cycle of length $N_s$ and
$\frac{p^{\min(k-1, w)}-1}{l_s}$ cycles of length $N_sl_s$;
$\frac{(p-1)p^{w-1}}{l_s}$ cycles of length $N_sl_sp^{k-w-t}$
for each $t\in \{1, 2, \cdots, k-1-w\}$, available only if $k-1-w>0$, where 
$w=\max\{e \vvert n^{2\cdot\ord(n^2)} \equiv 1\pmod{p^e}\}$.
\end{proposition}
\begin{proof}
Let $c_i=T_n^i(s)\bmod p$ for $i=0,1, \cdots, N_s-1$.
By this definition and that of $N_x$ and $l_x$, it follows that $N_{c_i}=N_s$;
$l_{c_i}=l_s$ from Lemma~\ref{lemma:ls=lsi}.
Using the definition of $w$ along with Lemma~\ref{pro:thershold:equal},
one has
$w=\max\{e \vvert T'_{n^{N_x l_x}}(x)\equiv 1\pmod{p^e}\}$
for any $x\in\mathcal{Z}_p$.
Then, one further obtains the following system of congruences:
\begin{equation}\label{eq:tn'x:change}
\begin{cases}
T'_{n^{N_sl_s}}(c_i)\equiv 1\pmod{p^{w}};\\
T'_{n^{N_sl_s}}(c_i)\not\equiv 1\pmod{p^{w+1}}.
\end{cases}
\end{equation}
Since $T_{n^{N_s}}(c_i)\equiv c_i\pmod p$, one deduces $T_{n^{N_sl_s}}(c_i)\equiv c_i\pmod p$.
Replacing $n$ by $n^{N_sl_s}$ in Proposition~\ref{lemma:Tn(x):w} with $h=0$, 
one concludes that 
$T_{n^{N_sl_s}}(c_i)\equiv c_i\pmod{p^{w+1}}$, which
implies
$T_{n^{N_sl_s}}(c_i)\equiv c_i\pmod{p^{w}}$. 
Referring to Lemma~\ref{lemma:Tn:0pw}, one arrives at congruences
\begin{equation}\label{eq:TnXk:pw1}
T'_{n^{N_sl_s}}(a_i)\equiv T'_{n^{N_sl_s}}(c_i )\pmod{p^{w+1}}
\end{equation}
and
\begin{equation}\label{eq:TnXk:pw2}
\frac{T^{(m)}_{n^{N_sl_s}}(a_i)\cdot p^m}{m!} \equiv 0\pmod{p^{w+2}},
\end{equation}
where $m\ge2$ and $a_i \equiv c_i\pmod p$.
Applying Taylor's expansion, one can write
\[
T_{n^{N_sl_s}}(a_i+j p^t)
=T_{n^{N_sl_s}}(a_i)+\sum^{n^{N_sl_s}}_{m=1}(jp^{t-1})^m\frac{T^{(m)}_{n^{N_sl_s}}(a_i) p^m}{m!},
\]
where $j\in\mathbb{Z}$.
Substituting congruences~\eqref{eq:TnXk:pw1} and \eqref{eq:TnXk:pw2} into this equation, and noting that $m\cdot (t-1)+w+2\ge w+t+1$ for $m\ge 2$, one obtains
\begin{equation}\label{eq:tnlx}
T_{n^{N_sl_s}}(a_i+j p^t)\equiv T_{n^{N_sl_s}}(a_i)+j p^t T'_{n^{N_sl_s}}(c_i)\pmod{p^{w+t+1}}.
\end{equation}

If $l_s=1$, i.e., $T'_{n^{N_s}}(s)\equiv 1\pmod p$, substituting $n$ in Proposition~\ref{pro:Tnx=x:1} with $n^{N_s}$, there exists an element in $\mathcal{Z}_{p^k}$ that satisfies condition~\eqref{condition}.
If $l_s\neq 1$, substituting $n$ in Proposition~\ref{pro:Tnx=x:2} with $n^{N_s}$, one finds a unique element $s_k\in\mathcal{Z}_{p^k}$ that satisfies
condition~\eqref{condition}.
In both cases, the notation $s_k$ refers to the element satisfying condition~\eqref{condition}, which generates a single cycle of length $N_s$ by self-iteration in $G_{p^k}$. Denote this cycle as $\{c'_i\}_{i=0}^{N_s-1}$, where $c'_i=T^i_n(s_k)\bmod p^k$.
These states in the cycle satisfy condition~\eqref{condition}, namely
$c'_i \equiv c_i\pmod p$ and
\begin{equation}\label{eq:Tnc_i:c_i}
T_{n^{N_s}}(c'_i)\equiv c'_i\pmod{p^k}.
\end{equation}

Next, the analysis proceeds in two cases:
\begin{itemize}[\newItemizeWidth]
\item $k\leq w+1$: Applying congruence~\eqref{eq:tnlx} with $(a_i, t)=(c'_i, 1)$, 
one obtains $T_{n^{N_sl_s}}(c'_i+jp)\equiv T_{n^{N_sl_s}}(c'_i)+jpT'_{n^{N_sl_s}}(c_i)\pmod{p^{w+2}}$.
Adopting congruence~\eqref{eq:Tnc_i:c_i} and that in 
condition~\eqref{eq:tn'x:change} into the preceding congruence, one deduces
\[
T_{n^{N_sl_s}}(c'_i+jp) \equiv c'_i+jp\pmod{p^k}.
\]
Thus, the least period of the sequence $\{T_n^i(a)\bmod p^k\}_{i\ge 0}$ is $N_sl_s$ when $a\equiv c'_i\pmod p$ and $a\neq c'_i$.
Excluding the cycle $\{c'_i\}_{i=0}^{N_s-1}$,
the elements in the set
$\bigcup_{i=0}^{N_s-1}\{x \vvert x\in \mathcal{Z}_{p^k}, 
x\equiv c'_i \pmod p\}$
form $\frac{N_sp^{k-1}-N_s}{N_sl_s}=\frac{p^{k-1}-1}{l_s}$ cycles of length $N_sl_s$.

\item $k>w+1$: Applying congruence~\eqref{eq:tnlx} with $(a_i, t)=(c'_i, k-w)$, one deduces $T_{n^{N_sl_s}}(c'_i+jp^{k-w})\equiv T_{n^{N_sl_s}}(c'_i)+j p^{k-w}T'_{n^{N_sl_s}}(c_i)\pmod{p^{k+1}}$.
Using condition~\eqref{eq:tn'x:change} and congruence~\eqref{eq:Tnc_i:c_i} as above, one
concludes
\[
T_{n^{N_sl_s}}(c'_i+j p^{k-w}) \equiv c'_i+j p^{k-w}\pmod{p^k}.
\]
Therefore, the elements in set $\bigcup_{i=0}^{N_s-1}\{x \vvert x\in\mathcal{Z}_{p^k}, x\equiv c'_i\pmod{p^{k-w}}\}$ 
form a single cycle of length $N_s$ and 
$\frac{N_sp^w-N_s}{N_sl_s}=\frac{p^w-1}{l_s}$ cycles of length $N_sl_s$.
Furthermore, one considers the cases $t\in\{1, 2, \cdots, k-1-w\}$.
Since $w+t+1\le k$, one has $T_{n^{N_sl_s}}(c'_i)\equiv c'_i\pmod{p^{w+t+1}}$ from congruence~\eqref{eq:Tnc_i:c_i}.
Incorporating this congruence and condition~\eqref{eq:tn'x:change} into congruence~\eqref{eq:tnlx} with $(a_i, j)=(c'_i, j')$, one obtains
\begin{equation}\label{eq:Tnxk:temp3}
    \begin{cases}
    T_{n^{N_sl_s}}(c'_i+j' p^t)\equiv c'_i+j' p^t\pmod{p^{w+t}};\\
    T_{n^{N_sl_s}}(c'_i+j' p^t)\not\equiv c'_i+j' p^t\pmod{p^{w+t+1}}, 
    \end{cases}
\end{equation}
where $j'\in\mathbb{Z}$ and $p\nmid j'$.
Referring to Theorem~\ref{Theorem:periodValue} with $v=w+t$, one has the least period of the sequence $\{T_n^i(a)\bmod p^k\}_{i\ge 0}$ is $N_sl_sp^{k-(w+t)}$ for any $a\in\bigcup_{i=0}^{N_s-1}\{x \vvert x\in \mathcal{Z}_{p^k}, x\equiv c'_i\pmod{p^t},
x\not\equiv c'_i\pmod{p^{t+1}}\}$.
Thus, the elements in this set form
\(
\frac{N_s\cdot (p^{k-t}-p^{k-t-1})}{N_sl_s\cdot p^{k-w-t}}=\frac{(p-1)p^{w-1}}{l_s}
\)
cycles of length \(N_sl_sp^{k-w-t}\).
\end{itemize}
Besides such cycles of length greater than $N_sl_s$, which are available only if $k-1-w>0$, there is one cycle of length $N_s$ and
$\frac{p^{\min(k-1, w)}-1}{l_s}$ cycles of length $N_sl_s$ in either case discussed above.
\end{proof}

Using $G_{p^k}$ with the parameters $(n, p) = (13, 7)$, as depicted in Fig.~\ref{fig:SMNcheby:n13p7}, one may further illustrate and verify Proposition~\ref{pro:SMN:+1}. By calculation, one finds $w = \max\{e \vvert n^{2 \cdot \ord(n^2)} \equiv 1 \pmod{p^e}\} = 1$. For $s = 2$, one computes $N_s = 2$, $l_s = 1$, and the set $\{T^i_{13}(2) \bmod 7\}_{i=0}^{N_s-1} = \{2, 5\}$. When $k = 1$, one observes an elementary cycle ``$2 \rightarrow 5 \rightarrow 2$" in Fig.~\ref{fig:SMNcheby:n13p7}a), yielding $c_0 = 2$ and $c_1 = 5$.
Substituting $n = 13^2$ into Proposition~\ref{pro:Tnx=x:1}, one finds $z_s = 1$. Furthermore, $s_2 = 2$ and $s_3 = 44$ satisfy condition~\eqref{condition}. 
When $k = 2$, one obtains $c'_0 = 2$ and $c'_1 = T_{13}(2) \bmod 7^2 = 12$. The elements in the set $\bigcup_{i=0}^{1} \{x \vvert x\in \mathcal{Z}_{p^2}, x \equiv c'_i \pmod{7}\}$ form an elementary cycle ``$2 \rightarrow 12 \rightarrow 2$" along with $\frac{2 \cdot (7 - 1)}{2} = 6$ additional cycles of the same length. These cycles are shown within the dashed rectangle in Fig.~\ref{fig:SMNcheby:n13p7}b).
When $k = 3$, one finds $c'_0 = 44$ and $c'_1 = T_{13}(44) \bmod 7^3 = 152$. The elements in the set $\cup_{i=0}^{1} \{x \vvert x\in \mathcal{Z}_{p^3}, x \equiv c'_i \pmod{7^2}\}$ form an elementary cycle ``$44 \rightarrow 152 \rightarrow 44$" along with $\frac{2 \cdot (7 - 1)}{2} = 6$ additional cycles of equal length. These are depicted within the dashed rectangle in Fig.~\ref{fig:SMNcheby:n13p7}c).
Next, one considers the case where the modulo in condition~\eqref{eq:Tnxk:temp3} is taken with a lower power of $p$. The elements in the set $\cup_{i=0}^{1} \{x \vvert x\in \mathcal{Z}_{7^3}, x \equiv c'_i \pmod{7}, x \not\equiv c'_i \pmod{7^2}\}$ yield $\frac{2\cdot (7-1)\cdot 7}{2 \cdot 1 \cdot 7} = 6$ cycles of length $2 \cdot 1 \cdot 7 = 14$. An example of such a cycle is ``$2 \rightarrow 208 \rightarrow 149 \rightarrow 12 \rightarrow 296 \rightarrow 159 \rightarrow 100 \rightarrow 306 \rightarrow 247 \rightarrow 110 \rightarrow 51 \rightarrow 257 \rightarrow 198 \rightarrow 61 \rightarrow 2$". These cycles are presented within the solid rectangle in Fig.~\ref{fig:SMNcheby:n13p7}c).
As $k$ increases, one observes that the number of cycles of any given length stabilizes, yielding a complete description of the whole functional graph's structure, as described in Theorem~\ref{Theo:wTc}.

\begin{theorem}\label{Theo:wTc}
The number of cycles of length $T_c$ in the functional graph of the Chebyshev permutation polynomial $T_n(x)$ over $\Zp{k}$, $C_{T_c, k}$, is invariant for $k>w+1$, i.e.,
\begin{equation*}
C_{T_c, k}=C_{T_c, w+1},
\end{equation*}
where $w=\max\{e \vvert n^{2\cdot\ord(n^2)} \equiv 1\pmod{p^e}\}$.
\end{theorem}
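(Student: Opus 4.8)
The plan is to pass from the global count $N_{T_c,k}$ to a sum of per‑orbit contributions, read each contribution off Proposition~\ref{pro:SMN:+1}, and then observe that once $k>k_s$ the only datum in that proposition still moving with $k$ is the \emph{range} of the exponent $i$, which merely produces longer and longer cycles without ever changing how many cycles of a given length occur.

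\emph{Step 1.} First I would use Proposition~\ref{prop:partcircle}: $F_{p^k}$ is a disjoint union of directed cycles, and reduction modulo $p$ carries each of them onto a cycle of $F_p$. Grouping the cycles of $F_{p^k}$ by the $F_p$-cycle underneath them gives $N_{T_c,k}=\sum_{O}(\text{number of length-}T_c\text{ cycles of }F_{p^k}\text{ over }O)$, the sum over the cycles $O$ of $F_p$. For $O$ of length $N_O$ and any $x\in O$, the product $T'_{n^{N_O}}(x)=\prod_{j=0}^{N_O-1}T'_n(T_n^{\,j}(x))$ is cyclic, hence the same for all $x\in O$; put $l_O=\ord(T'_{n^{N_O}}(x))$. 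Then the cycles of $F_{p^k}$ over $O$ are precisely the connected components of Proposition~\ref{pro:SMN:+1} with $N_x=N_O$ and $l_x=l_O$.

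\emph{Step 2.} Next I would show that the lifting threshold $w_O=\max\{t\mid T'_{n^{N_Ol_O}}(x)\equiv1\pmod{p^{t}}\}$ appearing in Proposition~\ref{pro:SMN:+1} is the same for all orbits $O$ and equals $k_s-1$. From $T_{n^{N_Ol_O}}(x)\equiv x\pmod p$ and $T'_{n^{N_Ol_O}}(x)\equiv1\pmod{p^{w_O}}$, Lemma~\ref{lemma:Tn(x):w} (with $n$ replaced by $n^{N_Ol_O}$ and $h=0$) gives $T_{n^{N_Ol_O}}(x)\equiv x\pmod{p^{w_O+1}}$, so the hypothesis of Lemma~\ref{pro:thershold:equal} is met for every modulus $p^{w}$ with $w\le w_O+1$. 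That lemma (resting on Lemmas~\ref{lemma:thershold:equal} and~\ref{lemma:nk1k2}) then yields $T'_{n^{N_Ol_O}}(x)\equiv1\pmod{p^{w}}\iff n^{2\ord(n^{2})}\equiv1\pmod{p^{w}}$ for those $w$; evaluating at $w=w_O$ and $w=w_O+1$ and invoking the maximality of $w_O$ forces $w_O=\max\{t\mid n^{2\ord(n^{2})}\equiv1\pmod{p^{t}}\}=k_s-1$. Write $w:=k_s-1$, so that $k_s=w+1$.

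\emph{Step 3 and the main obstacle.} Finally, for $k>k_s=w+1$ the ``$k>w+1$'' clause of Proposition~\ref{pro:SMN:+1} applies to every orbit at once: over $O$ there is one cycle of length $N_O$, exactly $\tfrac{p^{w}-1}{l_O}$ cycles of length $N_Ol_O$, and for each $i\in\{1,\dots,k-w-1\}$ exactly $\tfrac{(p-1)p^{w-1}}{l_O}$ cycles of length $N_Ol_Op^{\,i}$. The three multiplicities are free of $k$; raising $k$ only extends the range of $i$, i.e.\ inserts cycles of the fresh length $N_Ol_Op^{\,k-w-1}$ without disturbing the count of any length already present. Since $N_O<p$ and $l_O<p$ (cf.\ the proof of Corollary~\ref{pro:Nc:pk}, via \cite[Theorem~5.1.1]{hall1959marshall} and \cite[Corollary~2.18]{Wan:Lecture:2003}), the lengths $N_O$ and $N_Ol_O$ lie in $[1,p^{2})$ and are coprime to $p$, whereas every $N_Ol_Op^{\,i}$ with $i\ge1$ is a multiple of $p$; the two families never collide. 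Summing over $O$, for $\gcd(T_c,p)=1$ I would obtain
\begin{equation*}
N_{T_c,k}=|\{O\mid N_O=T_c\}|+\sum_{O\,:\,N_Ol_O=T_c}\frac{p^{w}-1}{l_O},
\end{equation*}
which already equals the value at $k=k_s$ (there the ``otherwise'' clause of Proposition~\ref{pro:SMN:+1} supplies $\tfrac{p^{k_s-1}-1}{l_O}=\tfrac{p^{w}-1}{l_O}$ cycles of length $N_Ol_O$ and one of length $N_O$), so $N_{T_c,k}=N_{T_c,k_s}$ for every $k>k_s$; for $T_c=p^{j}T_c'$ with $p\nmid T_c'$ the count reads $\sum_{O\,:\,N_Ol_O=T_c'}\tfrac{(p-1)p^{w-1}}{l_O}$ as soon as $k-w-1\ge j$, again independent of $k$. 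The hard part is Step~2 — pinning down that threshold $w$ uniformly over all $F_p$-orbits — since that is the one place where the period analysis of Section~\ref{Sec:Chebypre} (Theorem~\ref{Theorem:periodValue}, Lemma~\ref{lemma:Tn(x):w}) must be fused with the derivative identities (Lemmas~\ref{lemma:thershold:equal},~\ref{lemma:nk1k2},~\ref{pro:thershold:equal}); after that, the stabilisation of $N_{T_c,k}$ is immediate from the $k$-independence of the multiplicities in Proposition~\ref{pro:SMN:+1}.
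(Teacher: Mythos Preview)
Your proposal is correct and follows essentially the same route as the paper: decompose $F_{p^k}$ according to the $F_p$-orbits, invoke Proposition~\ref{pro:SMN:+1} for each orbit, use Lemma~\ref{lemma:Tn(x):w} to feed Lemma~\ref{pro:thershold:equal} and thereby show that the threshold $w_x$ is the same for every $x\in\mathbb{Z}_p$ and equals $k_s-1$, and then read off the $k$-independence of the cycle multiplicities. Your write-up is in fact somewhat more careful than the paper's (you check the equivalence of Lemma~\ref{pro:thershold:equal} at both $w_O$ and $w_O+1$, and you explicitly separate the cases $\gcd(T_c,p)=1$ and $p\mid T_c$), but the underlying argument is the same.
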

\begin{proof}
Consider an integer $s\in\mathcal{Z}_p$. Referring to Proposition~\ref{pro:SMN:+1}, all $p^{k-1}N_s$ elements in the set $\{x \vvert 
x\in \mathcal{Z}_{p^k}, (x\bmod p)\in\{T_n^i(s)\bmod p\}_{i=0}^{N_s-1}\}$ form one cycle of length $N_s$, $\frac{p^{w}-1}{l_s}$ cycles of length $N_sl_s$, and $\frac{(p-1)p^{w-1}}{l_s}$ cycles of length $N_sl_sp^{k-w-t}$ when $k>w+1$, where
$t\in\{1, 2, \cdots, k-w-1\}$.
Therefore, the number of cycles of any length depends only on the parameters $n$, $p$, and $l_s$, independent of $k$.
One proves the theorem by iteratively traversing the elements
in $\mathcal{Z}_p$ that have not yet been covered by the preceding set.
\end{proof}

\setlength\tabcolsep{7pt} 
\begin{table}[!htb]
 \centering
 \caption{The number of cycles of period $T_c$ in $G_{p^k}$ when $(n, p)=(11, 3)$.}
    \begin{tabular}{*{7}{c|}c}
    \hline
    \diagbox{$k$}{$C_{T_c, k}$}{$T_c$}
    &1	&3	&9	&27	&81	&243	&729 \\ \hline
    1	&3	&	&	&	&	&	&    \\ \hline
    2	&9	&	&	&	&	&	&    \\ \hline
    3	&15	&4	&	&  &	&	&    \\ \hline
    4	&15	&10	&4	&	&	&	&    \\ \hline
    5	&15	&10	&10	&4	&	&	&    \\ \hline
    6	&15	&10	&10	&10	&4	&	&    \\ \hline
    7	&15	&10	&10	&10	&10	&4	&    \\ \hline
    8	&15	&10	&10	&10	&10	&10	&4   \\ \hline
    \end{tabular}
\label{table:cycle:n11p3}
\end{table}

Theorem~\ref{Theo:wTc} is verified by the cycle distribution within the functional graph $G_{p^k}$ for various sets of $(n, p, k)$. 
The four sets of \((n, p)\) used in Fig.~\ref{fig:cycle:disall} are \((13, 7)\), \((43, 5)\), \((19, 7)\), and \((443, 5)\).
The corresponding thresholds are $2, 3, 4$, and $5$ by
calculating from Theorem~\ref{Theo:wTc}.
The results are consistent with the corresponding sub-figure in Fig.~\ref{fig:cycle:disall}. However, it can be verified that Lemma~\ref{lemma:pw:Tn'1} does not hold for 
$p=3$. This causes the evolution rule of the Chebyshev permutation polynomial over \(G_{3^k}\) not to satisfy Theorem~\ref{Theo:wTc}, as verified by Table~\ref{table:cycle:n11p3}.

\section{Application of the found properties of $G_{p^k}$}
\label{sec:application}

This section introduces a strategy for selecting parameters and initial states of Chebyshev polynomials to avoid potential vulnerabilities in key exchange protocols based on these polynomials, as designed in \cite{Yoshioka:ChebyshevPk:TCAS2:2018}.
The same strategy can be applied to public-key encryption algorithms that utilize Chebyshev polynomials over $\Zp{k}$.

Consider a scenario in which Alice and Bob wish to agree on a secret key within $\mathcal{Z}_{p^k}$. Initially, one selects an integer $n$ that satisfies condition~\eqref{PermuteCondition}.
Note that the parameters $n$, $p$, $k$, and the initial state $a \in \mathcal{Z}_{p^k}$ are public. Alice begins by transmitting $T_{n^{N_1}}(a)\bmod p^k$ to Bob. Bob, in response, sends $T_{n^{N_2}}(a) \bmod p^k$ back to Alice. At this point, both parties can compute the shared secret key 
$\lambda=T_{n^{N_2}}(T_{n^{N_1}}(a))\bmod p^k=T_{n^{N_1}}(T_{n^{N_2}}(a))\bmod p^k=T_{n^{N_1+N_2}}(a)\bmod p^k$. Note that parameters $N_1$ and $N_2$ are kept secret by Alice and Bob, respectively. As previously discussed, the number of valid values for $\lambda$ is determined by the least period of the Chebyshev integer sequence $\{T_n^i(a) \bmod p^k\}_{i\ge 0}$. A poorly chosen set of parameters or an unfortunate initial state can drastically reduce the size of the \emph{session key space}, thereby increasing the likelihood of a successful attack.
Figures~\ref{fig:SMNcheby:n13p7} and \ref{fig:cycle:disall} illustrate that certain cycles exhibit very short periods, regardless of $k$. However, by carefully leveraging the properties of the functional graph of Chebyshev polynomials, one can develop a method to select parameters and initial states that avoid such vulnerabilities.

Referencing Proposition~\ref{pro:SMN:+1}, one finds that smaller values of $w$ are preferred to produce cycles of longer length within the domain $\mathcal{Z}_{p^k}$. Specifically, when $w=t=1$, the number of the cycles whose length is greater than $N_sl_s$ is given by
$\frac{(p-1)p^{w-1}}{l_s}=\frac{p-1}{l_s}$
and the cycle length becomes $N_sl_sp^{k-w-t}=N_sl_sp^{k-2}$. 
To achieve this, one selects $n$ such that \( w=\max\{e \vvert n^{2\cdot\ord(n^2)} \equiv 1\pmod{p^e}\}=1\).
Recalling the discussion on the case $k>w+1$ in the proof of Proposition~\ref{pro:SMN:+1}, 
one chooses an initial state $a\in \bigcup_{i=0}^{N_s-1} \{x \vvert x\in \mathcal{Z}_{p^k}, x\equiv T^i_n(s)\pmod{p}, x\not\equiv T^i_n(s_{2})\pmod{p^2}\}$ to ensure $t=1$, where $s_{2}$ is iteratively computed according to 
Proposition~\ref{pro:Tnx=x:1} or \ref{pro:Tnx=x:2} depending on 
whether $T'_{n^{N_s}}(s)\equiv 1\pmod p$, with $s$ being a predefined element in $\mathcal{Z}_p$ as used in Sec.~\ref{sec:network}.

For example, when $(n, p, k, s)=(11, 7, 8, 3)$, one calculates $\max\{e \vvert n^{2\cdot\ord(n^2)} \equiv 1\pmod{p^e}\}=1$, $N_s=1$ and $l_s=6$. According to Proposition \ref{pro:Tnx=x:2}, one determines $s_2=24$ and $s_8=2882400$.
To optimize the session key space, one selects $a$ such that $a \equiv 3\pmod{7}$ and $a \not\equiv 24\pmod{7^2}$, which implies $a=3 + 7j$, where $j \not\equiv 3\pmod{7}$. This results in a session key space of size $1 \cdot 6 \cdot 7^6=705894$.
Conversely, if $a\equiv 2882400\pmod{7^6}$, the maximal size of the corresponding session key space is only $1\cdot 6 \cdot 7=42$.

As discussed at the beginning of Sec.~\ref{subsec:cheby:integer}, understanding the properties of Chebyshev polynomials aids in defining the parameter $l_x$ appropriately. The findings presented in this paper may be used in recent cryptographic applications of Chebyshev polynomials in \cite{chatterjee2018DSC, Dariush:Chebyshevprivacy:TII2020, Lee:cheby-approxi:TDSC22, chen:PRNS:TC22, Tomar:auth:TMC2024} from both cryptographic and cryptanalytic perspectives.

\section{Conclusion}

This paper investigated the structure of the functional graph 
constituted by a Chebyshev permutation polynomial over $\Zp{k}$, where $p$ is a prime number greater than three.
The complete explicit expression of the least period of the sequence
generated by iterating the polynomial from any given initial state was presented.
Combining the properties of
Chebyshev polynomials and their derivatives, the paper discloses
how the graph structure evolves concerning parameter $k$.
It can be concluded that the graph structure of the Chebyshev permutation polynomial is quite regular, and 
a certain number of short cycles exist regardless of the parameter's value.
The work in this paper can serve as a solid basis for studying the graph structure of Chebyshev polynomials over various algebraic domains, such as ring $\mathbb{Z}_q$, 
where $q$ is a composite number.

\bibliographystyle{IEEEtran_doi}
\bibliography{V3_Double-Chebyshev-0918}

\vskip -1\baselineskip plus -1fil
\graphicspath{{author_figures_pdf/}}
\begin{IEEEbiography}[{\includegraphics[width=1.1in, height=1.25in,clip,keepaspectratio]{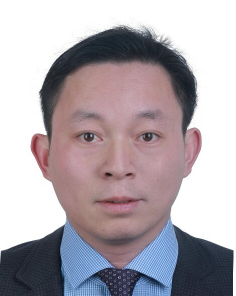}}]{Chengqing Li}(M'07-SM'13)
received his M.Sc. degree in Applied Mathematics from Zhejiang University, China in 2005 and
his Ph.D. degree in Electronic Engineering from City University of Hong Kong in 2008. Thereafter,
he worked as a Postdoctoral Fellow at The Hong Kong Polytechnic University till September 2010.
Then, he worked at the College of Information Engineering, Xiangtan University, China. From April 2013 to July 2014, he worked at the
University of Konstanz, Germany, under the support of the Alexander von Humboldt Foundation.
From May 2020, he have been working at School of Computer Science, Xiangtan University, China as the dean.
He is an associate editor for the International Journal of Bifurcation and Chaos and Signal Processing.

Prof. Li focuses on the security analysis of multimedia encryption and privacy protection schemes.
He has published over sixty papers on the subject in the past 20 years, receiving more than 6000 citations with an h-index of 41.
He is a Fellow of IET.
\end{IEEEbiography}
\vskip -2\baselineskip plus -1fil

\begin{IEEEbiography}[{\includegraphics[width=1in,height=1.25in,clip,keepaspectratio]{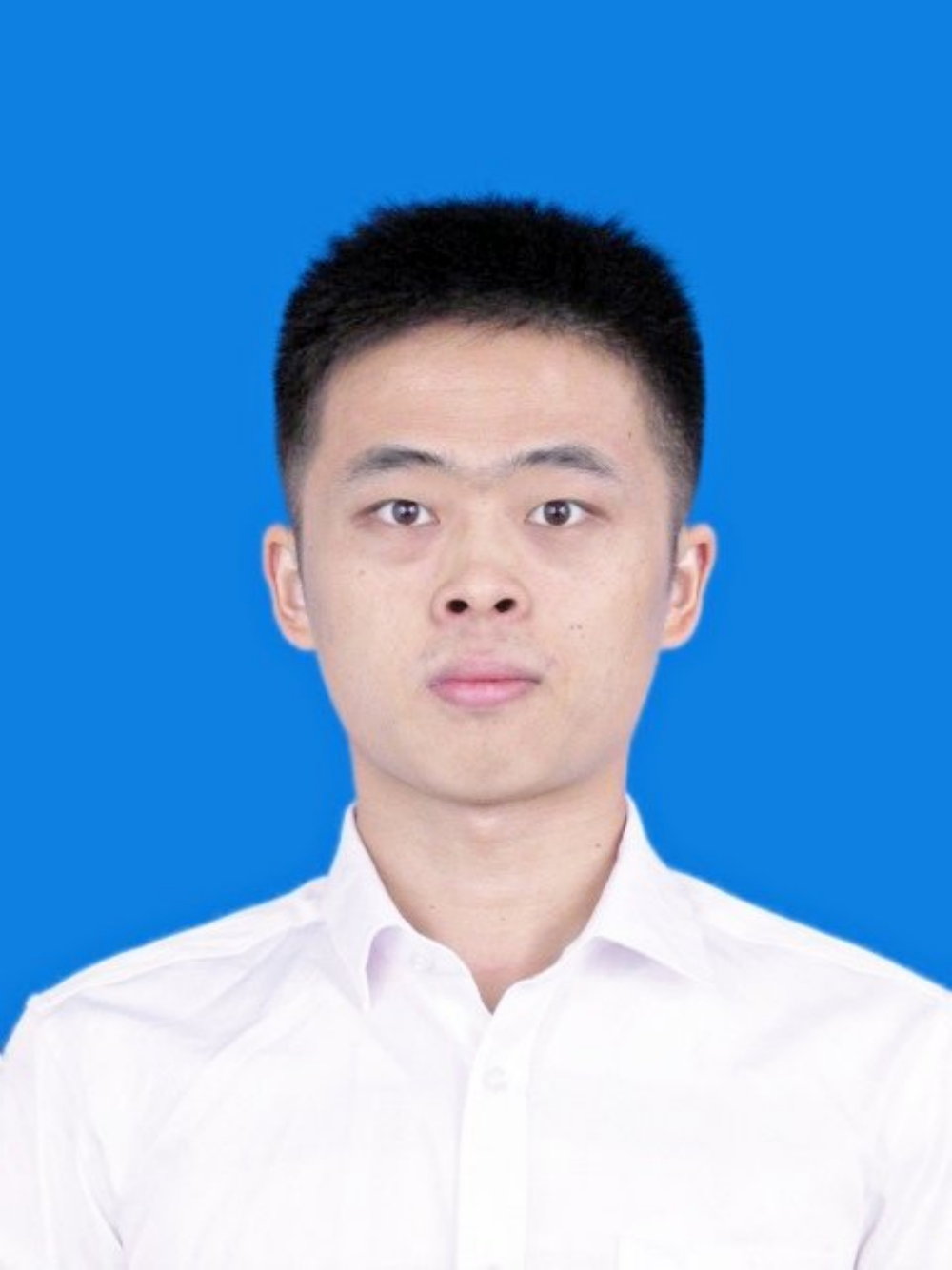}}]{Xiaoxiong Lu}
received B.Sc. degree in Applied Mathematics at the School of Science, Hunan University of Technology, in 2017.
He received M.Sc. degree in Computational Mathematics in 2020, and his PhD degree in Statistics in 2024, both from the School of Mathematics and Computational Science, Xiangtan University.
His research interests include complex networks, pseudo-random number and nonlinear dynamics.
\end{IEEEbiography}

\vskip -2\baselineskip plus -1fil

\begin{IEEEbiography}[{\includegraphics[width=1in,height=1.25in,clip,keepaspectratio]{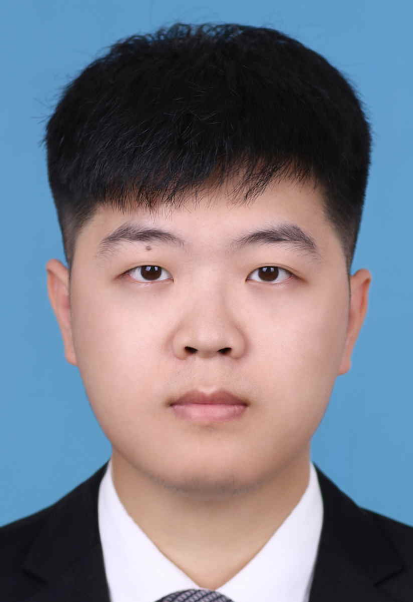}}]{Kai Tan}
received B.Sc. degree in mechanism design, manufacturing and automatization at the School of Mechanical Engineering, Xiangtan University in 2015.
He received his M.Sc. degree in computer science at the School of Computer Science, Xiangtan University in 2020.
Now, he is pursuing PhD degree at the same school.
His research interests include complex networks and nonlinear dynamics.
\end{IEEEbiography}

\vskip -2\baselineskip plus -1fil

\begin{IEEEbiography}[{\includegraphics[width=1in,height=1.25in,clip,keepaspectratio]{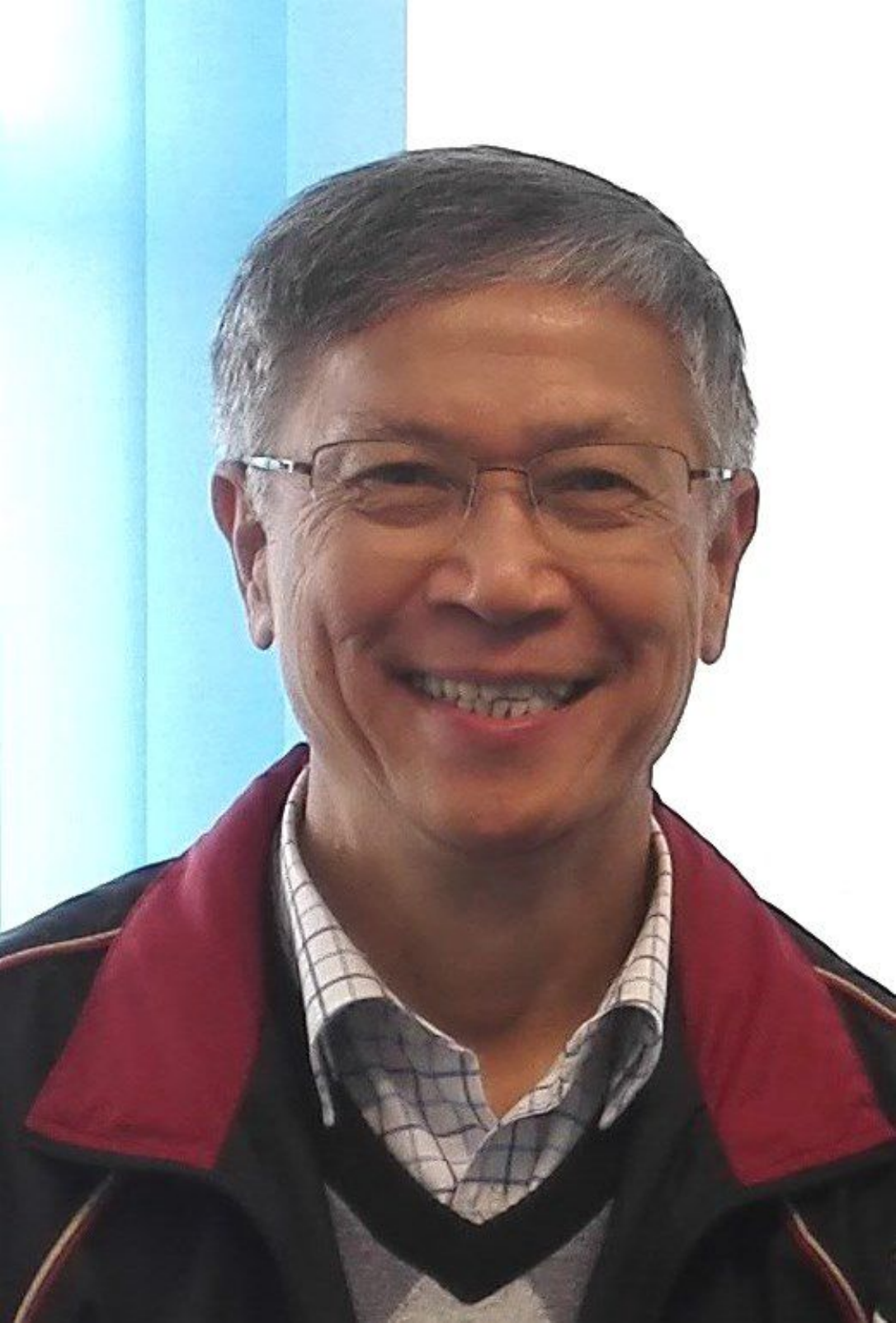}}]{Guanrong Chen}
(M'89--SM'92--F'97--LF'19) received the MSc. degree in Computer Science from Sun Yat-sen University, Guangzhou, China in 1981 and the Ph.D. degree in Applied Mathematics from Texas A\&M University, Texas, TX, USA in 1987. He has been a Chair Professor and the Director of the
Centre for Chaos and Complex Networks, City University of Hong Kong, Hong Kong since 2000, prior to that he was a tenured Full Professor with the University of Houston, Houston, TX, USA.

Prof. Chen was a recipient of the 2011 Euler Gold Medal, Russia, and Highly Cited Researcher Awards in Engineering as well as in Mathematics by Thomson Reuters, and conferred Honorary Doctorate by the Saint Petersburg State University, Russia in 2011 and by the University of Le Havre, Normandy, France in 2014. He is a member of the Academia of Europe since 2014 and a Fellow of The World Academy of Sciences since 2015.
\end{IEEEbiography}
\end{document}